\newtheorem{ex}{Example}
\newtheorem{thm}{Theorem}
\newtheorem{definition}{Definition}
\newtheorem{prop}{Proposition}
\newtheorem{lem}{Lemma}
\newtheorem{conj}{Conjecture}
\newtheorem{corollary}{Corollary}
\DeclareMathOperator{\sia}{sia}
\DeclareMathOperator{\sar}{sar}
\DeclareMathOperator{\scr}{scr}
\DeclareMathOperator{\pc}{pc}
\title{\LARGE \bf
%Shortest SIA Products
Sets of Stochastic Matrices with Converging Products: Bounds and Complexity
}
\author{Pierre-Yves Chevalier, Vladimir V. Gusev, Julien M. Hendrickx,\\ and Rapha\"{e}l M. Jungers
}
\begin{document}

\maketitle 
\thispagestyle{empty}

\begin{abstract}
An SIA matrix is a stochastic matrix whose sequence of powers converges to a rank-one matrix. This convergence is desirable in various applications making use of stochastic matrices, such as consensus, distributed optimization and Markov chains.
We study the shortest SIA products of sets of matrices. We observe that the shortest SIA product of a set of matrices is usually very short and we provide a first upper bound on the length of the shortest SIA product (if one exists) of any set of stochastic matrices. 
%We also provide bounds on the power of an SIA matrix that has a positive column.
We also provide an algorithm that decides the existence of an SIA product. 

When particularized to automata, the problem becomes that of finding periodic synchronizing words, and we develop the consequences of our results in relation with the celebrated \v{C}ern\'{y} conjecture in automata theory. %. We believe that periodic synchronizing words can be a new way to analyze the \v{C}ern\'{y} conjecture.

We also investigate links with the related notions of positive-column, Sarymsakov, and scrambling matrices. 
\end{abstract}

\setcounter{page}{1}

\section{Introduction}
A stochastic matrix $P$ is an \emph{SIA matrix}  (Stochastic\footnote{A matrix is \emph{stochastic} if it is nonnegative and the sum of the elements on each row is 1.} Indecomposable Aperiodic) if the limit
\[ 
\lim_{t \to +\infty} P^t
\] 
exists and all of its rows are equal, i.e., the limit is a rank-one matrix. These matrices play a fundamental role in the theory of Markov chains as they correspond to chains converging to unique limiting distribution~\cite{Paz1963}. They also arise  %analysis of eigenvector centrality measures in networks, and 
in 
%Another important application is 
discrete-time consensus systems, which are the systems representing groups of agents trying to agree on some value by iterative averaging~\cite{bl-ol}. Such systems can be modelled by the update equation
\begin{equation}x(t+1) = P(t) x(t) \label{eq:cons},\end{equation} where $x(t)$ is the vector of the values of the agents at time $t$ and $P(t)$ is a stochastic transition matrix representing how agents compute their new values.
%Applications?? Due to that node centrality in networks, consensus discrete-time consensus systems, that is, 

%\com{Should we add applications here: node centrality in networks, consensus? Consensus added}

%\com{Here is the "problem" i talked about, that consensus should not be presented as a particular case of Markov chains. Changing it will require to change the structure quite a bit.}
Markov chains, and consensus systems, are often time-inhomogeneous, i.e., the transition matrix $P(t)$ can be different at different steps $t$, where $P(t)$ typically belongs to a finite set of stochastic matrices $\mathcal{S} = \{A_1, A_2, \ldots, A_k\}$. Such chains arise in diverse settings including flocking~\cite{Jad_tac}, %consensus (distributed computation), 
simulated annealing in optimization~\cite{annealing87}, Monte Carlo methods~\cite{maire2014comparison} and queueing theory~\cite{Alfa2008}. 

The analysis of convergence is much more difficult in the inhomogenous case than in the time-invariant case. One of the  tools to analyze Markov chains with multiple modes is the classical result of J.~Wolfowitz stating that any product of transition matrices from $\mathcal{S}$ converges to a rank-one matrix if and only if every product of matrices in $\mathcal{S}$ is SIA~\cite{Wolf63}. Sets satisfying this condition are called quasidefinite~\cite{Paz65}, consensus \cite{bl-ol}, or weakly-ergodic matrix sets~\cite{COPPERSMITH}. Informally, the chain represented by such $\mathcal{S}$ forgets its distant past. This theorem has found many applications~\cite{Jad_tac,bl-ol} and guarantees convergence of the modelled stochastic system in \emph{every} possible scenario or  switching between matrices in $\mathcal{S}$.
In the present paper we study sets of matrices $\mathcal{S}$ that have \emph{at least one} converging sequence. While the former condition is likely to appear in a general or adversarial setting, the latter is more relevant in a controlled environment. %Our main interests are algorithmic and combinatorial questions related to these notions.

We will say that a set of stochastic matrices $\mathcal{S}$ is \emph{SIA} if there is a product of matrices from $\mathcal{S}$ with repetitions allowed that is an SIA matrix. It is not hard to show (see Section~\ref{sec:classes}) that $\mathcal{S}$ has an infinite product converging to a rank-one matrix if and only if $\mathcal{S}$ is an SIA set. Therefore, SIA sets correspond to stochastic systems that can be brought to convergence by controlling the switching signal. Another motivation to study SIA sets comes from the study of inhomogeneous Markov chains or consensus systems %where the switching between different transition matrices occurs at random
\emph{with random switching}~\cite{costa2005discrete}.
%, so-called Markovian jump linear systems~\cite{}. 
If at every step the matrix is chosen independently and with non-zero probability from $\mathcal{S}$, then the system converges with probability one if\footnote{Due to the fact that any finite product of matrices from $\mathcal{S}$ appears infinitely many times with probability one} and only if $\mathcal{S}$ is an SIA set.
%Our motivation to study this notion -- control, choose convergence for the system.
%In the first part of our paper we prove  that, for a given set of stochastic matrices, the existence of an SIA product is equivalent to the existence of a scrambling, positive-column, which are commonly used combinatorial matrix classes that we will define. 
%%% %address a question of deciding whether a given set of stochastic matrices $\mathcal{P}$ is SIA % coincide with the SIA sets in the case of stochastic matrices.

For an SIA set $\mathcal{S}$, we denote by $\sia(\mathcal{S})$  the length of the shortest SIA product of matrices from $\mathcal{S}$, and we call this quantity the \emph{SIA-index} of $\mathcal{P}$. The SIA-index is a natural parameter of SIA sets. %Our motivation to introduce and investigate this parameter are threefold. 
%First of all, it is a natural parameter of SIA sets. 
It is similar and related to the classical and well studied notion of \emph{exponent} -- the length of the shortest product that is entrywise positive, if one exists (see~\cite[Section 3.5]{BrualdiRyser1991CombinatorialMatrixTheory} for a survey of the single matrix case, and~\cite{GGJ16,PV} for more recent work on matrix sets). Similar quantities have been defined for different matrix classes. For instance, the scrambling-index is defined as the length of the shortest \emph{scrambling}\footnote{\label{note1}These classes will be defined in Section~\ref{sec:classes}.} %\footnote{A matrix is \emph{scrambling} if for every pair of rows, there is a column in which the rows have a positive element.}
product of a matrix set, if one exists. In this article, we will define indices for all the considered classes: the classes of Sarymsakov, SIA and positive-column matrices\footnotemark[3].

In the context of a system whose switching sequence can be controlled, an SIA product corresponds to a switching sequence that can be repeated to make the system converge to a rank-one matrix. Thus, shorter SIA products correspond to simpler controllers. Furthermore, the length of the SIA product has an influence on the converging rate. Indeed, if $P = A_\ell \dots A_2 A_1$ is an SIA product, the sequence $\dots A_\ell \dots A_2 A_1 A_\ell \dots A_2 A_1$ converges to a rank-one matrix at an average rate of $\lambda_2^{1/\ell}$, where $\lambda_2$ is the second largest eigenvalue of $P$. % see~\cite{IpsenSelee,Kirkland_scr_index} for the discussions about the related ergodicity coefficients.
%Additionally, shorter SIA products will correspond to simpler control switching sequences. 

Finally, the SIA-index brings new insights to synchronizing automata and the \emph{\v{C}ern\'{y} conjecture}~\cite{Cerny64} stating that for any set of stochastic\footnote{Typically, the conjecture is posed for deterministic finite state automata, i.e., matrices having exactly one 1 on each row, but the presented version is equivalent \cite{CDC}.} $n \times n$ matrices $\mathcal{S}$ either there is a product of length at most $(n-1)^2$ of matrices from $\mathcal{S}$ having a positive column or there is no such product at all. The conjecture has been open for half a century and the best bound obtained so far is cubic in $n$~\cite{volkov_survey}.
As we will soon see, a good upper bound on the SIA-index of $n \times n$ matrices would improve the state of the art on the \v{C}ern\'{y} conjecture. In particular, 
any subquadratic bound would bring a breakthrough.

% CHANGED \cite{Seneta81}
%Analogously to the scrambling index~\cite{Kirkland_scr_index}, the SIA-index can be used to tell, when the bounds \ed{produced by means of such coefficients are actually non-trivial.}
%, see Section~\ref{}} \com{what does that mean?}.
%we will see that a bound $B$ on the length of the shortest SIA product would imply a bound $(n-1)B$ on the length of the shortest positive-column product and in particular a bound on the reset threshold of automata. Therefore, a good upper bound on the SIA-index could improve the state of the art on the \v{C}ern\'{y} conjecture.
%COOL PART OMITTED:
%Lastly, we believe that SIA products and indices in a certain sense capture the idea of \emph{periodicity} in convergence, synchronization, etc. 
%On one hand, it is a natural parameter of SIA sets, on the other, connections to 
%Section \ref{sec:sia-index} of this is devoted to bounds on the SIA-indices of sets of $n \times n$ stochastic matrices. % and the problem of computing the SIA-index of a given set.

%KICKED OUT PROBABLY GOOD REMARK!
%We want to remark that our work is in some sense parallel and extends the work on positive products of matrix sets. 
%about positive products of matrices. Cohen and Sellers studied classes of matrices possessing a positive products

\textbf{Our contribution.} 
%\com{This part has been rewritten (shortened).}
We introduce and study the SIA-index of sets of stochastic matrices. We show that the largest value of the SIA-index among all SIA sets of $n \times n$ stochastic matrices, $\sia(n)$, is $O(n^3)$. 
%the largest value of the SIA-index among all SIA sets of $n \times n$ stochastic matrices and we show that $\sia(n) = O(n^3)$. 
Moreover, we show that $\sia(n)$ grows at least as $n$. 
We conjecture that the actual growth rate is closer to the provided lower bound and support this statement by performing an exhaustive search for small values of $n$ on a computer grid.

%In order to reduce the number of products to analyze in the exhaustive search \com{say something short about Lyndon words.}

We show that a SIA set has a scrambling, a Sarymsakov and a positive-column product, and, conversely, a set that has a scrambling, a Sarymsakov or a positive-column product is a SIA set. 
%We show that the SIA sets are the sets that have  a product that is scrambling, Sarymsakov, or that has a positive column. 
As a consequence, the same polynomial-time procedure can be used to decide whether a given set of stochastic matrices has any of the aforementioned products. %belongs to one of the aforementioned classes.
We show the SIA-index is NP-hard to compute, even if all matrices in $\mathcal{S}$ have a positive diagonal, and to approximate within a factor of $O(\log(n))$. Matrices with positive diagonals appear in  consensus applications and many problems are computationally easier for these matrices~\cite{CDC}.
%\ed{We show that computing and approximating the SIA-index are NP-hard problems. They remain NP-hard even when restricted to the class of matrices that have a positive diagonal} -- a class that appears in consensus applications and for which many problems are computationally easier~\cite{CDC}.

Finally, we show that the $(n-1)$st power of any SIA automata matrix has a positive column and that the $(n^2 - 3n + 3)$rd power of any SIA matrix has a positive column. 

%Finally, we discuss the natural indices associated with sets of stochastic matrices possessing a scrambling, Sarymsakov, or positive-column product. We show that their largest values over sets of $n \times n$ stochastic matrices coincide and compare them to the SIA-index. Furthermore, we show that the same computational hardness results are valid for these indices as well.

We mention that the majority of the presented results only depend on the pattern of nonzero elements in the matrices. Therefore, they also can be applied for \emph{row-allowable matrices}, i.e., nonnegative matrices that have at least one positive element on each row.

\textbf{Paper Organization.} In Section~\ref{sec:classes} we give an overview of commonly used matrix classes: Positive-column, scrambling, Sarymsakov and SIA matrices (Subsection \ref{sec:def}) and show the relations between them (Subsection \ref{sec:rel}). We study the associated indices and bounds on these indices (Subsection \ref{sec:indices}) and derive an upper bound on the power at which an SIA matrix has a positive column (Subsection~\ref{sec:from_to}). Section~\ref{sec:sia-index} is devoted to the properties of the SIA-index. In Subsection~\ref{sec:bound}, lower and upper bounds on the largest SIA-index are provided, and we present an experiment that computes the largest SIA-index for small values of $n$.  In Subsection~\ref{sec:complexity} we discuss the procedure to decide whether a given set of matrices is SIA and the hardness of computing and approximating the SIA-index.

%\section{Preliminaries}

\section{Common Matrix Classes and their Indices}
\label{sec:classes}
%\renewcommand{\theenumi}{\roman{enumi}}
%\begin{enumerate}
%\item positive-column
%\item Sarymsakov
%\end{enumerate}

In this section, we define different matrix classes that have been introduced in the context of Markov chains and probabilistic automata \cite{Seneta81, Paz71, Wolf63}: positive-column matrices, scrambling matrices, Sarymsakov matrices and SIA matrices. We show that these classes are included in one another and %that a set has an SIA product if and only if it has positive-column product
that sufficiently large powers of matrices from the larger classes belong to the smaller classes (for example, any sufficiently large power of a Sarymsakov matrix is scrambling).

We also define the associated 
%different
\emph{indices} 
%that is, for a given set, the length of the  shortest product that has a given property (e.g., scrambling). 
-- the lengths of the shortest products belonging to the corresponding classes.
We relate these indices to each other and present upper bounds on them.
%We show the relation between these indices, we define upper bounds on these indices, and we show the relations between these bounds.

\subsection{Definition of the Matrix Classes}
\label{sec:def}

We start by defining the rather intuitive notion of a positive-column matrix. These matrices are called \emph{Markov matrices} in certain sources \cite[Definition 4.7]{Seneta81}, but we avoid this terminology as the term \emph{Markov matrix} is sometimes used to denote a stochastic matrix. Positive-column matrices have a direct interpretation in many applications. For instance,  in the consensus system \eqref{eq:cons}, an element of the transition matrix $P(t)$ represents the influence of the agent corresponding to the column on that of the row, and a positive-column transition matrix thus corresponds to a situation in which an agent influences all the others. Additionally, positive-column matrices are related to convergence, as the sequence of powers of a positive-column stochastic matrix converges to a rank-one matrix.  
\begin{definition}[Positive-column matrix] A \emph{positive-column matrix} is a  nonnegative matrix that has a positive column. 
We will denote the set of positive-column matrices by $S_\text{PC}$.
%We will write $S_\text{PC}$ the set of positive-column matrices.
\end{definition}

We now define scrambling matrices, a class that has been extensively investigated due to their applications to Markov chains \cite{Seneta81, Paz71}.

\begin{definition}[Scrambling matrix \cite{Paz71}] A nonnegative matrix is called \emph{scrambling} if for any pair of rows $(i,j)$, there is a column $k$ such that $a_{ik} > 0$ and $a_{jk} > 0$. 
We will denote the set of scrambling matrices by $S_\text{SCR}$.
%We will write $S_\text{SCR}$ the set of scrambling matrices.
\end{definition}

Scrambling matrices have two interesting properties:
\begin{itemize}
\item the class is closed under multiplication: the product of two scrambling matrices is scrambling;
\item for any $n \in \mathbb{N}$ there is a length $\ell$ such that any product of $\ell$ or more $n \times n$ scrambling matrices has a positive column.
\end{itemize}
%\edpy{A direct consequence of the second property is that any sufficiently long product of scrambling matrices has a positive column. }

%\com{cite Sarymsakov}

The Sarymsakov class of matrices is a larger class that has the same desirable properties~\cite{Sarymsakov61}. % (\textcolor{red}{TODO: check what length of product of Sarymsakov matrices has a positive column}). 

\begin{definition}[Sarymsakov matrix \cite{Sarymsakov61}] 
We define the \emph{consequent function} of a given nonnegative $n \times n$ matrix $P$. For any $S \subseteq \{1,\dots,n\}$, the consequent function $F$ is equal to
$$F_P(S) = \{j : \exists i \in S \text{ s.t. } a_{ij} > 0\}.$$
 A nonnegative matrix  $P$ is called a \emph{Sarymsakov} matrix if for any two disjoint nonempty subsets $S$ and $S'$,
\begin{equation} F_P(S) \cap F_P(S') \neq \emptyset \label{sarym1}\end{equation}
or \begin{equation} |F_P(S) \cup F_P(S')| > |S \cup S'|.\label{sarym2}\end{equation} 
%We will write $S_\text{SAR}$ the set of Sarymsakov matrices.
We will denote the set of Sarymsakov matrices by $S_\text{SAR}$.
\end{definition}

This somewhat formal definition of a Sarymsakov matrix can be interpreted in terms of the graph associated with the matrix $P$, that is, the directed graph whose adjacency matrix is $P$. The consequent function $F_P(S)$ returns the set of out-neighbors of a set of nodes $S$. Condition \eqref{sarym1} corresponds to intersecting sets of neighbours and Condition \eqref{sarym2} corresponds to ``expansion" of sets of neighbours: the union of $F_P(S) \cup F_P(S')$ contains more elements than the union of the sets $S \cup S'$.

\begin{definition}[SIA Matrix] A matrix is called \emph{SIA}, i.e., \emph{Stochastic, Indecomposable and Aperiodic}, if it is stochastic and $$Q = \lim_{n \rightarrow \infty} P^n $$ exists and all the rows of $Q$ are the same. 
%We will call $S_\text{SIA}$ the set of SIA matrices. 
We will denote the set of SIA matrices by $S_\text{SIA}$.
\end{definition}
SIA matrices are called  \emph{regular matrices} in \cite{Seneta81}.

\subsection{Relation between the four Matrix Classes}
\label{sec:rel}
The four matrix classes that we have defined are included in one another:
\begin{equation} S_\text{PC} \subset S_\text{SCR} \subset S_\text{SAR} \subset S_\text{SIA} \label{inclusions}.\end{equation}
The first inclusion follows from the definitions and the second and third inclusions are proved  in  \cite[Section \textit{Bibliography and Discussion to \S\S4.3--4.4}]{Seneta81} and \cite[Section II]{Xia}.

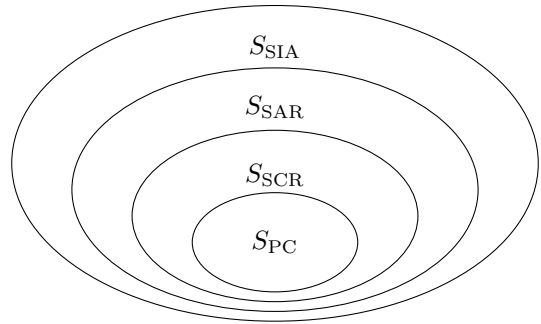
\begin{figure}[h!]
\centering
\def\firstcircle{(0,.15) ellipse (3.5cm and 2.1 cm)}
\def\secondcircle{(-90:.2cm)  ellipse (2.7cm and 1.62 cm)}
\def\thirdcircle{(-90:.55cm)  ellipse (1.9cm and 1.14 cm)}
\def\fourthcircle{(-90:.9cm)  ellipse (1.1cm and .66 cm)}

\begin{tikzpicture}
    \draw \firstcircle node[below] {};
    \draw \secondcircle node [above] {};
    \draw \thirdcircle node [below] {};
    \draw \fourthcircle node [below] {};

\node[align=center] at (0,-.9) {$S_\text{PC}$};
\node[align=center] at (0,0) {$S_\text{SCR}$};
\node[align=center] at (0,.9) {$S_\text{SAR}$};
\node[align=center] at (0,1.7) {$S_\text{SIA}$};

\end{tikzpicture}

\caption{Relation between different classes of stochastic matrices.}
\label{fourTypes}
\end{figure}

We present an alternative characterization of SIA matrices. 

\begin{prop}[Characterization of SIA Matrices] A stochastic matrix $P$ is SIA if and only if there is $p$ such that $P^p$ has a positive column.
\begin{proof} \textit{Only if.} Let $P$ be SIA. We prove that there is $p$ such that $P^p$ has a positive column. The $\lim_{t \to +\infty} P^t$ exists by definition of an SIA matrix. If there is no $p$ such that $P^p$ has a positive column, then  $\lim_{t \to +\infty} P^t$ has no positive column. On the other hand, %This is a contradiction because 
all the rows of $\lim_{t \to +\infty} P^t$ are the same and the sum of the elements on each row is equal to 1, so that $\lim_{t \to +\infty} P^t$ has a positive column and we have a contradiction.

\textit{If.} Let $p$ be such that $P^p$ is positive-column. By inclusions \eqref{inclusions}, $P^p$ is SIA because positive-column matrices are SIA. Hence $$ \lim_{t \to +\infty} (P^p)^t$$ exists and has all rows equal, and therefore $$ \lim_{t \to +\infty} P^t$$ exists and has all rows equal and $P$ is SIA.
\end{proof}
\label{SIA_pos}
\end{prop}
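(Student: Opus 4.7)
The plan is to prove the two implications separately; the reverse direction is the substantive one.

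For the \emph{only if} direction I would start from the assumption that $P$ is SIA, so that $L = \lim_{t\to\infty} P^t$ exists with all rows equal to some probability vector $v^\top$. Since the entries of $v$ sum to $1$, some coordinate $v_k$ is strictly positive, and the entrywise convergence $P^t \to L$ then forces every entry in column $k$ of $P^p$ to be positive once $p$ is large enough, yielding the required positive-column power.

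For the \emph{if} direction, assume that some power $P^p$ has a positive column. By the inclusion $S_{\text{PC}} \subset S_{\text{SIA}}$ coming from \eqref{inclusions}, $P^p$ is itself SIA, so $(P^p)^m \to Q = \mathbf{1}\pi^\top$ for some probability vector $\pi$. The remaining task is to promote convergence along the arithmetic subsequence $\{mp\}_m$ to convergence of the full sequence $\{P^t\}_t$. I would split $t = mp + r$ with $0 \le r < p$ and use $P^{mp+r} = (P^p)^m P^r \to Q P^r$, so it suffices to establish $QP = Q$, or equivalently $\pi^\top P = \pi^\top$.

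This identity is the one nontrivial point, and it is where I would invoke the uniqueness of the stationary distribution of the SIA matrix $P^p$: the vector $\pi^\top P$ is again a probability distribution and, using $\pi^\top P^p = \pi^\top$ together with associativity, is itself fixed by $P^p$; uniqueness then collapses it to $\pi^\top$. The main obstacle in the entire proof is precisely this upgrade from convergence of $(P^p)^m$ to convergence of $P^t$: the authors' own proof skates over it, but a clean justification rests either on the uniqueness of the invariant distribution of an SIA matrix, or equivalently on a short spectral argument comparing the eigenvalues of $P$ and $P^p$ on the unit circle.
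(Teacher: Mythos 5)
Your proposal is correct, and its overall skeleton matches the paper's: the \emph{only if} direction argues from the fact that the limit matrix has identical stochastic rows, hence a positive entry $v_k$, and entrywise convergence forces column $k$ of $P^p$ to be positive for large $p$ (the paper phrases this as a contradiction, you phrase it directly; the content is identical). The genuine difference is in the \emph{if} direction. The paper's proof ends with the bare assertion that since $\lim_m (P^p)^m$ exists with equal rows, ``therefore $\lim_t P^t$ exists and has all rows equal''; this step is exactly the one you flag, and it does require justification, since for a general stochastic matrix convergence of $(P^p)^m$ does not imply convergence of $P^t$ (take $P$ a nontrivial permutation with $P^p = I$). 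Your completion is valid: writing $t = mp + r$ gives $P^t \to QP^r$ with $Q = \mathbf{1}\pi^\top$, the vector $\pi^\top P$ is a probability distribution fixed by $P^p$ because $(\pi^\top P)P^p = (\pi^\top P^p)P = \pi^\top P$, and uniqueness of the stationary distribution of the SIA matrix $P^p$ (any fixed probability vector $\mu^\top$ equals $\lim_m \mu^\top (P^p)^m = \pi^\top$) forces $\pi^\top P = \pi^\top$, hence $QP^r = Q$ for all $r$ and the full sequence converges. An alternative, equally standard way to close the gap is the contraction argument: once $P^p$ has a positive column with entries at least $\delta>0$, every $P^t$ with $t\ge p$ does too, and the spread of each column decreases by a factor $1-\delta$ every $p$ steps, giving convergence of $P^t$ directly. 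Either way, your write-up is more complete than the one in the paper on this point.
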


This simple characterization has several consequences. It means that only the pattern of zero/nonzero elements determines if a matrix is SIA or not. %, in the sense that if $A$ and $B$ are stochastic matrices $A \sim B$, then either $A$ and $B$ are SIA or none of them are SIA.
%First, it means the fact that a matrix is SIA depends only on the pattern of zero/nonzero elements. Indeed, let us define the relation $A \lesssim B$ means that each positive element in $A$ must be positive in $B$,
%if $P_1$ is SIA and $P_1 \lesssim P_2$,  %means that each zero element in $A$ must be zero in $B$,
%then $P_2$ is SIA.
Another consequence of Proposition \ref{SIA_pos} is the next proposition.

\begin{prop} For any set of stochastic matrices $\mathcal{S}$ the four statements are equivalent.
\begin{itemize}
    \item There exists an \textbf{SIA} product of matrices from $\mathcal{S}$
    \item there exists a \textbf{Sarymsakov} product of matrices from $\mathcal{S}$
    \item there exists a \textbf{scrambling} product of matrices from $\mathcal{S}$
    \item there exists a \textbf{positive-column} product of matrices from $\mathcal{S}$.
\end{itemize}
\begin{proof}
%It can be noticed a sufficiently large powers of an SIA matrix $A$ have a positive column (because $\lim_{n \rightarrow \infty} A^n$ has a positive column). This power that has a positive column is scrambling and Sarymsakov as well. These observations lead to the next proposition.
This is a consequence of Proposition \ref{SIA_pos} and of the inclusions \eqref{inclusions}.
%We prove that any SIA matrix has a power that has a positive column. The rest of the proof will follow from inclusions \eqref{inclusions}. Let $P$ be an SIA matrix. The $\lim_{t \to +\infty} P^t$ exists by definition of an SIA matrix. If there is no $p$ such that $P^p$ has a positive column, then  $\lim_{t \to +\infty} P^t$ has no positive column. This is a contradiction because all the rows of $\lim_{t \to +\infty} P^t$ are the same and the sum of the elements on each row is equal to 1.} 
%Recall that an SIA matrix $P$ is a matrix such that the limit  $$\lim_{t \to +\infty} P^t $$ exists and has all its rows equal. Since the sum of the elements on each row is equal to 1, $$\lim_{t \to +\infty}$$ has a positive column and so there is a finite power of $P$ that has a positive column. }
%This is a consequence of  the inclusions \eqref{inclusions} and the fact that any SIA matrix has a power that is positive-column.
\end{proof}
\label{prop:existence}
\end{prop}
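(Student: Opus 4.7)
The plan is to close a chain of implications
\[
\text{pos-column product} \Rightarrow \text{scrambling product} \Rightarrow \text{Sarymsakov product} \Rightarrow \text{SIA product} \Rightarrow \text{pos-column product},
\]
which together yield all four equivalences. The first three implications are immediate from the inclusions \eqref{inclusions}: any product that lies in the smaller class automatically lies in the larger classes, so a single product witnesses all three of the stronger conditions at once. No computation is needed here.

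The one implication with content is SIA $\Rightarrow$ positive-column. Suppose there is a product $P = A_{i_\ell}\cdots A_{i_1}$ of matrices from $\mathcal{S}$ such that $P$ is SIA. By Proposition~\ref{SIA_pos}, there exists an integer $p\geq 1$ such that $P^p$ has a positive column. But $P^p$ is itself a product of matrices from $\mathcal{S}$, namely the word $(A_{i_\ell}\cdots A_{i_1})^p$ of length $p\ell$. Hence $\mathcal{S}$ admits a positive-column product, which closes the cycle.

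The main (and only) obstacle is recognizing that Proposition~\ref{SIA_pos} applies to the product $P$ itself, so that the positive-column witness can be taken to be a further product of elements of $\mathcal{S}$ rather than just a power of an abstract SIA matrix. Once this observation is made, the proof is essentially a two-line deduction from the inclusions and the characterization of SIA matrices already established.
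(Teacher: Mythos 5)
Your argument is correct and is exactly the paper's intended proof: the inclusions \eqref{inclusions} give the easy implications, and Proposition~\ref{SIA_pos} applied to the SIA product $P$ yields the positive-column witness $P^p$, which is again a product of matrices from $\mathcal{S}$. The paper states this in one line; you have merely spelled out the same reasoning.
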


%\edpy{For sets,  however, }
We define the notion of an \emph{SIA set} as a set of stochastic matrices that has an SIA product (and therefore a Sarymsakov, a scrambling and a positive-column product). %\edpy{As, we will see in the following proposition, there is no need for a similar notion of Sarymsakov, scrambling or positive-column set, since a SIA set has Sarymsakov, a scrambling and a positive-column products.} 

We now present a particular class of stochastic matrices that has been studied extensively in the context of automata theory \cite{volkov_survey}. We will call them automaton matrices in this article.
\begin{definition}[automaton matrix] A matrix is called \emph{automaton matrix} if it is  stochastic and has exactly one 1 on each row. 
\end{definition}

In the case of automaton matrices, the notions of Sarymsakov, scrambling and positive-column coincide, as shown in the next proposition.

\begin{prop} Let $P$ be an automaton matrix. The three properties are equivalent:
\begin{enumerate}[(i)]
    \item $P$ is Sarymsakov
    \item $P$ is scrambling
    \item $P$ is positive-column.
\end{enumerate} %a Sarymsakov automaton matrix is scrambling and a scrambling automaton matrix is positive-column. 
\begin{proof}
The inclusions in Equation \eqref{inclusions} imply (iii) $\Rightarrow$ (ii) and (ii) $\Rightarrow$ (i), so that (i) $\Rightarrow$ (iii) remains to be proved. Let $P$ be Sarymsakov and let $i,j \in \{1, \dots, n\}$. %be  the indices of two rows of $P$. 
Condition \eqref{sarym2} cannot be satisfied for sets $S=\{i\}$ and $S'=\{j\}$ because $$|F_P(\{i\}) \cup F_P(\{j\})| > |\{i\} \cup \{j\}| = 2 $$  would imply $$|F_P(\{i\})| > 1 \text{ or } |F_P(\{j\})| > 1,$$ and at least one of the rows $i$ and $j$ would have more than one positive element, which is impossible by the definition of an automaton matrix. Therefore, Condition \eqref{sarym1} is satisfied for any pair of singletons $S=\{i\}, S'=\{j\}$, meaning that for any rows $i,j$, the positive element in row $i$ is in the same column as the positive element in row $j$, so that the matrix $P$ is in fact positive-column. %By inclusions \eqref{inclusions} $P$ is also scrambling.} %then $P$ is scrambling because the condition \eqref{sarym2} is never satisfied for automaton matrices, so that for any disjoints sets $S$, $S'$, the condition \eqref{sarym1} must be satisfied. 

%If $P$ is scrambling, then $P$ is positive-column because $P$ has only one positive element in each row. 

%The other inclusions come from Equation (\ref{inclusions}).
\end{proof}
\label{coincide}
\end{prop}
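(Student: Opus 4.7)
The plan is to follow the hint already embedded in the statement: the inclusions $S_\text{PC} \subset S_\text{SCR} \subset S_\text{SAR}$ in~\eqref{inclusions} immediately give (iii) $\Rightarrow$ (ii) $\Rightarrow$ (i), so the only nontrivial implication is (i) $\Rightarrow$ (iii). I would prove this by exploiting the very restrictive structure of an automaton matrix, namely that each row has exactly one positive entry, to force the failure of the ``expansion'' condition~\eqref{sarym2} on singletons, which then forces the ``intersection'' condition~\eqref{sarym1}.

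Concretely, suppose $P$ is Sarymsakov and automaton. For each row index $i$, let $c(i)$ denote the unique column in which row $i$ has a $1$, so $F_P(\{i\}) = \{c(i)\}$ is a singleton. Given any two distinct rows $i \neq j$, take $S = \{i\}$ and $S' = \{j\}$, which are disjoint and nonempty. Then $|F_P(S) \cup F_P(S')| \leq 2 = |S \cup S'|$, so~\eqref{sarym2} cannot hold. The Sarymsakov condition therefore forces $F_P(\{i\}) \cap F_P(\{j\}) \neq \emptyset$, and since both sides are singletons this means $c(i) = c(j)$.

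Since this holds for every pair $i,j$, there is a single column $k$ with $c(i) = k$ for all $i$, i.e.\ every row of $P$ has its $1$ in column $k$. Hence column $k$ is everywhere positive and $P$ is positive-column, giving (iii).

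There is no real obstacle here: the argument is essentially a cardinality count combined with the pigeonhole-like observation that a singleton cannot ``expand''. The only thing to be careful about is phrasing condition~\eqref{sarym2} correctly as a \emph{strict} inequality, so that equality $|F_P(\{i\}) \cup F_P(\{j\})| = 2$ does \emph{not} suffice and one is genuinely forced into~\eqref{sarym1}.
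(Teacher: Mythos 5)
Your proposal is correct and follows essentially the same route as the paper: the inclusions handle (iii) $\Rightarrow$ (ii) $\Rightarrow$ (i), and for (i) $\Rightarrow$ (iii) you observe that singleton consequent sets make the expansion condition~\eqref{sarym2} impossible, forcing the intersection condition~\eqref{sarym1} and hence a common positive column. Your write-up is in fact slightly more explicit than the paper's in spelling out that the forced equalities $c(i)=c(j)$ for all pairs yield a single common column.
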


\subsection{Indices and Bounds}
\label{sec:indices}
In the present section we will associate a natural combinatorial parameter with each of the previously discussed classes and provide bounds on them that depend only on the size of the matrices.
\begin{definition}[$\mathcal{X}$-index] Let $\mathcal{X}$ be a class of matrices. The \emph{$\mathcal{X}$-index} of a set of square matrices $\mathcal{S}$ is the smallest $\ell$ such that there is a product of length $\ell$ of matrices from $\mathcal{S}$ belonging to $\mathcal{X}$.
\end{definition}
The $\mathcal{X}$-index can be seen as the length of the shortest witness proving that a given set of matrices does actually belong to the class $\mathcal{X}$. If no product of $\mathcal{S}$  belongs to $\mathcal{X}$, we will agree that the $\mathcal{X}$-index is undefined.
In this section, we will focus on SIA-, Sarymsakov-, scrambling- and positive-column-indices of SIA sets and we will denote the corresponding index of a set $\mathcal{S}$ by $\sia(\mathcal{S})$, $\sar(\mathcal{S})$, $\scr(\mathcal{S})$, $\pc(\mathcal{S})$.
We note that our terminology is in agreement with the case of a single matrix, where the scrambling-index has received a lot of attention~\cite{AKELBEK20091099}.

Observe that for every SIA set $\mathcal{S}$, we have the following inequalities as a direct consequence of the inclusions \eqref{inclusions}: 
\begin{equation} 
\sia(\mathcal{S}) \leq \sar(\mathcal{S}) \leq \scr(\mathcal{S}) \leq \pc(\mathcal{S}).
\label{ineq} \end{equation}

%We can associate with each of them a natural combinatorial parameter -- \emph{the $\mathcal{X}$-index} -- as the length of the shortest product belonging to $\mathcal{X}$. The $\mathcal{X}$-index can be seen as the length of the shortest witness proving that a given set of matrices does actually belong to the class $\mathcal{X}$. Note that our terminology is in the agreement with the case of a single matrix, where the scrambling-index has received a lot of attention~\cite{}.

One of the basic question arising in regard with these indices is how large the index of a set of $n \times n$ stochastic matrices can be? Such questions have received a lot of attention for the exponent~\cite[Section 3.5]{BrualdiRyser1991CombinatorialMatrixTheory} that can be seen as the $\mathcal{X}$-index, where $\mathcal{X}$ is the class of positive matrices. %the scrambling-index~\cite{} of a single matrix, 
Likewise, the positive-column-index  has been studied extensively in the particular case of automaton matrices. It is called \emph{reset threshold} in this context.
%values, including special classes, scrambling, 

\begin{thm} 
Let $\sia(n)$, $\sar(n)$, $\scr(n)$, $\pc(n)$ be the largest values of the corresponding indices among all SIA sets of $n \times n$ matrices.
For any dimension $n \in \mathbb{N}$, we have
 \begin{equation} \sia(n) \leq \sar(n) = \scr(n) = \pc(n). \label{ineq2}\end{equation}
\begin{proof} First, notice that we have the following inequalities a direct consequence of \eqref{ineq}:
\[ \sia(n) \leq \sar(n) \leq \scr(n) \leq \pc(n).\]
%This is a consequence of \eqref{ineq}. 
It remains to prove that $ \pc(n) \leq  \sar(n)$. Let $\pc_A(n)$ (resp. $\sar_A(n)$)  be the largest positive-column-index (resp. Sarymsakov-index) among all $n \times n$ sets of automaton matrices.
\begin{align}\pc(n) &= \pc_A(n) \label{e1}\\
&= \sar_A(n) \label{e2}\\
&\leq \sar(n) \label{e3}.
\end{align}
The first equality \eqref{e1} is Theorem~4 of~\cite{CDC}. The second \eqref{e2} is a consequence of the fact that an automaton matrix is Sarymsakov if and only if it is positive-column (see Proposition~\ref{coincide}). Inequality~\eqref{e3} holds true trivially, since automaton matrices form by definition a subset of stochastic matrices.
\end{proof}
\end{thm}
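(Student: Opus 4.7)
The plan is to establish the equalities via a circular chain: once the three inequalities $\sar(n)\le\scr(n)\le\pc(n)$ are in hand from \eqref{ineq}, it suffices to close the loop by proving $\pc(n)\le\sar(n)$. The inequality $\sia(n)\le\sar(n)$ is already immediate from \eqref{ineq} applied pointwise to each SIA set, since the largest value of an index over a class of sets is monotone in the index. So the only nontrivial work is the reverse inequality between $\pc(n)$ and $\sar(n)$.

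To prove $\pc(n)\le\sar(n)$, I would pass through the subclass of automaton matrices. Let $\pc_A(n)$ and $\sar_A(n)$ denote the worst-case positive-column and Sarymsakov indices over SIA sets whose matrices are automaton matrices. First I would invoke a prior reduction result (Theorem~4 of \cite{CDC}) stating that $\pc(n)=\pc_A(n)$, which says that the worst positive-column-index among all $n\times n$ stochastic SIA sets is already attained by a set consisting of automaton matrices. Then, by Proposition~\ref{coincide}, for automaton matrices being Sarymsakov is equivalent to being positive-column; consequently the shortest Sarymsakov product of an automaton set coincides with its shortest positive-column product, giving $\pc_A(n)=\sar_A(n)$. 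Finally, since automaton matrices are stochastic, the supremum of $\sar(\mathcal{S})$ over automaton SIA sets is at most the supremum over all stochastic SIA sets, i.e.\ $\sar_A(n)\le\sar(n)$. Chaining these gives $\pc(n)\le\sar(n)$, which together with \eqref{ineq} forces the three right-hand quantities in \eqref{ineq2} to be equal.

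I do not expect any serious obstacle, since all of the needed ingredients are already available: the class inclusions yield three of the four inequalities, Proposition~\ref{coincide} yields the collapse of classes on automaton matrices, and the external result from \cite{CDC} provides the crucial reduction to automaton matrices. The only point to be slightly careful about is the definition of $\pc_A(n)$ and $\sar_A(n)$: one must make sure to range only over sets of automaton matrices that are SIA (so that the indices are defined) and to observe that the products themselves need not be automaton matrices, but the members of $\mathcal{S}$ are, which is precisely the setting in which Proposition~\ref{coincide} and the result from \cite{CDC} apply.
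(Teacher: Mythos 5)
Your proposal is correct and follows essentially the same route as the paper's proof: reduce to automaton matrices via Theorem~4 of \cite{CDC}, collapse the Sarymsakov and positive-column indices there via Proposition~\ref{coincide}, and chain the inequalities. One small correction to your closing caveat: the step $\pc_A(n)=\sar_A(n)$ requires Proposition~\ref{coincide} to be applied to the \emph{products}, not just to the members of $\mathcal{S}$, and this is legitimate precisely because a product of automaton matrices is again an automaton matrix.
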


\subsection{From SIA to a Positive Column}
%\com{could be extended and renamed "relation between indices"?}
\label{sec:from_to}

%In this section, we look at upper bounds on the power at which SIA matrices have a positive column. We will prove in Theorem \ref{SIA_auto} that the $n-1$\ed{st} power of an SIA \emph{automata} matrix has a positive column, we will prove that the power $n^2 - 3n + 3$ of an SIA matrix has a positive column (this will be a consequence of Theorem \ref{wielk}), and we will derive a finer bound that depends on the number of columns that are positive in sufficiently large powers of the matrix (Theorem \ref{wielk}). 
In this section, we will provide upper bounds on the powers at which SIA matrices are guaranteed to have a positive column. First, we will prove in Proposition \ref{SIA_auto} that the $(n-1)$st power of any SIA \emph{automata} matrix has a positive column. Then, we will show that the $n^2 - 3n + 3$ power of any arbitrary SIA matrix has a positive column and present a finer bound that depends on the number of columns that are eventually positive (Theorem \ref{wielk}).

%This question has been extensively studied for primitive matrices. A matrix $P$ is primitive if it  is a nonnegative and there is a natural $t$ such that $P^t > 0$. \edpy{We can readily notice, from the definition, that a stochastic primitive matrix is SIA.} The notion of  \emph{exponent} \edpy{of a primitive matrix} has been defined as the smallest power of the matrix that is positive, that is, the smallest $t$ in the previous definition \cite{wielandt}. The $k$th \emph{local exponent} $\exp_P(k)$ is  the smallest power of the matrix that has a positive $k$th column \cite{SHEN1998117}. The row and columns can be ordered so that $\forall k \in \{1, \dots, n-1\}, \; \exp_P(k) \leq \exp_P(k + 1)$. Therefore, we will call \emph{first local exponent} the smallest power that has a positive column. In fact, for a single matrix, the first local exponent is also the positive-column-index of that matrix.

We will say that a matrix and a graph are \textit{associated} if the matrix is the adjacency matrix of the graph. %To prove the first theorem of this section, we need the notion of \emph{in-tree}.
%\begin{definition}[In-tree]
%We call an \textit{in-tree} a directed graph in which 
%\begin{itemize}
%\item a node, called the \textit{sink}, has a self-loop, 
%\item there are no other cycle in the graph
%\item there is a path from every node to the sink.
%\end{itemize}
%\end{definition}

\begin{prop} Let $P$ be an $n \times n$ SIA automaton matrix. Its $(n-1)$st power has a positive column. Moreover, the value $n-1$ cannot be decreased (in general): for any $n \in \mathbb{N}$ there exists an $n \times n$ SIA automaton matrix whose $n-2$ power has no positive column.
\begin{proof} By Proposition~\ref{SIA_pos}, there exists $p$ such that $P^p > 0$. The entry $i,j$ of matrix $P^p$ counts the number of walks of length $p$ from node $i$ to node $j$ in the graph $\mathcal{G}$ associated to $P$, and the positive column means that there is a walk of length $p$ from any node to the node corresponding to the positive column. 

\textit{Claim: $\mathcal{G}$ has no cycle other than a self-loop.} If there is a cycle of length $\geq 2$, the submatrix $C$ corresponding to the nodes of the cycle is a permutation matrix. Up to reordering of the nodes, the matrix $P$ is then equal to 
$$
\begin{pmatrix} A & B \\
0 & C
\end{pmatrix},
$$ with $C$ a permutation matrix. Then $P$ is not positive-column and we have a contradiction. Therefore there is no cycle of length $\geq 2$. Furthermore, $\mathcal{G}$ has a self-loop because there is no other cycle and all node have one outgoing edge.

It is now clear that the positive column of $P^p$ is the column corresponding to the node $i $ that has a self-loop. In $\mathcal{G}$, there is a walk of length at most $n-1$ from any node to $i$, and because node $i$ has a self-loop, there is a walk of length exactly $n-1$.

%The entry $i,j$ of matrix \edpy{$P^{n-1}$} counts the number of walks of length $n-1$ from node $i$ to node $j$ in the associated graph.  %An automaton matrix is SIA if and only if its associated graph is an in-tree. Therefore, from any node, there is a walk of length $n-1$ to the sink. In $P^{n-1}$ the column corresponding to the sink contains the number of walks from each node to the sink and the column is thus positive.
For the tightness part, the following matrix is an example of SIA automaton matrix whose $(n-2)$nd power has no positive column and its $n-1$ has one. 
$$\begin{pmatrix} 1 &  &  &  & \\ 
1 &  &&& \\
& 1 &&& \\
&& \ddots & &  \\
&&& 1 & 0\\
\end{pmatrix}.$$ 
\end{proof}
\label{SIA_auto}
\end{prop}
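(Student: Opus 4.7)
The plan is to exploit the fact that an automaton matrix $P$ is the adjacency matrix of a \emph{functional graph} $\mathcal{G}$, in which every node has out-degree exactly $1$. From any node there is a \emph{unique} walk of any given length, so $(P^p)_{ij} \in \{0,1\}$ equals $1$ iff that unique walk of length $p$ starting at $i$ terminates at $j$. A positive column of $P^p$ therefore means that every node reaches one common node in exactly $p$ steps, and locating that node and bounding the required $p$ is the entire content of the proposition.

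First I would apply Proposition~\ref{SIA_pos} to produce some $p$ with $P^p$ having a positive column at a node $j^\star$. I then intend to rule out cycles of length $\geq 2$ in $\mathcal{G}$: if such a cycle existed, its vertex set would be forward-invariant, so after reindexing $P$ takes the block triangular form $\bigl(\begin{smallmatrix} A & B \\ 0 & C \end{smallmatrix}\bigr)$ with $C$ a permutation matrix. Every power inherits this shape with $C^p$ again a permutation, forcing the columns indexed inside the cycle to have a zero in some cycle row while the other columns are zero in all cycle rows; this contradicts the existence of a positive column. Hence every cycle of $\mathcal{G}$ is a self-loop, and since the graph is finite with out-degree $1$ everywhere, a self-loop must exist. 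A parallel invariance argument shows that the self-loop is \emph{unique}: two self-loops at distinct nodes $i_1, i_2$ would remain disjointly fixed forever, so $(P^p)_{i_1 i_2} = (P^p)_{i_2 i_1} = 0$ for every $p$, again blocking any positive column. Therefore $\mathcal{G}$ is a tree with all edges pointing toward a unique root $i^\star$ that carries a self-loop.

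Given this structure, the walk of length $p$ starting at node $i$ reaches $i^\star$ precisely when $p$ is at least the distance from $i$ to $i^\star$ in the tree. These distances are bounded by the depth of an $n$-node tree, namely $n-1$, so $P^{n-1}$ has a positive column at $i^\star$. For tightness I would verify that the displayed bidiagonal example realizes the path $n \to n-1 \to \cdots \to 2 \to 1 \to 1$: the distance from node $n$ to the root $1$ is exactly $n-1$, so the $(n,1)$-entry of $P^{n-2}$ is zero and, since column $1$ is the only column that can ever be positive, $P^{n-2}$ has no positive column. The main obstacle is the structural analysis in the second paragraph (ruling out long cycles and multiple self-loops); once the rooted-tree picture is established, the bound $n-1$ and its tightness follow immediately from elementary distance considerations.
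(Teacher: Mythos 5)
Your proposal is correct and follows essentially the same route as the paper: exclude cycles of length at least $2$ via the block-triangular form with a permutation block $C$ (so that $C^p$ blocks any positive column), deduce that the functional graph is a tree hanging off a self-looped root, and bound the distances by $n-1$, with the same bidiagonal example for tightness. The one point where you go slightly beyond the paper is in explicitly arguing that the self-loop is \emph{unique} (two fixed nodes could never meet), a detail the paper leaves implicit when it identifies ``the'' node with a self-loop; this is a welcome clarification but not a different method.
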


This last proposition has possible consequences for the \v{C}ern\'{y} conjecture because it means that $(n-1)\sia(n)$ is a bound on the reset threshold\footnote{Recall that the reset threshold is the positive-column index of a set of automaton matrices.} of synchronizing automata and therefore, any subquadratic bound would improve the state-of-the-art upper bound on the \v{C}ern\'{y} conjecture.

Proposition \ref{SIA_auto} gives a tight upper bound on the power at which an SIA automata matrix has a positive column. We will now generalize this result to all SIA matrices. 
%To that end, we will make use of results about exponents and local exponents that we briefly introduced at the beginning of the section.
In order to do so, we will rely on the \emph{local exponents} of primitive matrices. Recall that a square matrix $P$ is primitive if $P^t>0$ (entrywise) for some natural $t$. The smallest such $t$ is known as the \emph{exponent} $\exp(P)$ of $P$~\cite[Section 3.5]{BrualdiRyser1991CombinatorialMatrixTheory}.
%This question has been extensively studied for primitive matrices. A matrix $P$ is primitive if it  is a nonnegative and there is a natural $t$ such that $P^t > 0$. \edpy{We can readily notice, from the definition, that a stochastic primitive matrix is SIA.} 
%The notion of  \emph{exponent} \edpy{of a primitive matrix} has been defined as the smallest power of the matrix that is positive, that is, the smallest $t$ in the previous definition \cite{wielandt}. The $k$th \emph{local exponent} $\exp_P(k)$ is  the smallest power of the matrix that has a positive $k$th column \cite{SHEN1998117}. The row and columns can be ordered so that $\forall k \in \{1, \dots, n-1\}, \; \exp_P(k) \leq \exp_P(k + 1)$. Therefore, we will call \emph{first local exponent} the smallest power that has a positive column. In fact, for a single matrix, the first local exponent is also the positive-column-index of that matrix.
The local exponents are refinements of this characteristic : the $k$th \emph{local exponent} $\exp_k(P)$ of a primitive matrix $P$ is the smallest power having at least $k$ positive rows~\cite{BruBol90}.
Observe that the first local exponent of $P$ is the positive-column index of the transpose of $P$ and the $n$th local exponent is $\exp(P)$. We will make use of the following theorem:

\begin{thm}[{\cite[Theorem 3.4]{BruBol90}}]
\label{thm:localExp}
The largest value of the $k$th local exponent among primitive $n \times n$ matrices, with $n \geq 2$, is equal to $n^2-3n+k+2$.
%Let $\exp_k(n)$ be t
%Let $P$ be primitive matrix and $G$ the graph associated to that matrix. Let the ordering of the nodes be such that $\forall k \in \{1, \dots, n-1\}, \; \exp_{G}(k) \leq \exp_{G}(k + 1)$
%$$\exp_{G}(k) \leq s (n-2) + k,$$ where $s$ is the girth of the graph associated to the matrix $P$. % \com{check that it's defined}. 
\end{thm}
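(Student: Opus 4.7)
The plan is to establish matching upper and lower bounds on $\exp_k(P)$. Let $P$ be a primitive $n \times n$ matrix and let $G$ be its associated digraph. For each vertex $i$, define $r_i$ to be the least integer $t$ such that $G$ contains a walk of length exactly $t$ from $i$ to every vertex, equivalently the least $t$ for which row $i$ of $P^t$ is positive. Because a primitive $P$ has no zero column, once row $i$ becomes positive at time $r_i$ it stays positive for all larger $t$, so $\exp_k(P)$ coincides with the $k$-th smallest value in the multiset $\{r_1, \ldots, r_n\}$. It then suffices to show that at least $k$ of the $r_i$ are bounded by $n^2 - 3n + k + 2$, and to exhibit a primitive matrix attaining equality for each $k$.

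For the upper bound, I would invoke the Frobenius coin-problem argument that drives Wielandt's classical exponent bound. Since $G$ is strongly connected with cycle-length $\gcd$ equal to $1$, from any vertex $i$ one can locate two cycles whose lengths $p$ and $q$ are coprime; every integer $t \geq (p-1)(q-1)$ is then representable as $\alpha p + \beta q$ with $\alpha, \beta \geq 0$. Concatenating a shortest path from $i$ to a target vertex at distance $d$ with these cycles yields walks of every length past $d + (p-1)(q-1)$. This already delivers the Wielandt-style uniform bound $r_i \leq n^2 - 2n + 2$. To sharpen it to $n^2 - 3n + k + 2$ on the $k$-th smallest threshold, I would induct on $k$: sort the vertices by $r_i$, and show that the marginal cost of handling the $k$-th vertex, using cycles shared with the already-positive vertices, is exactly one step.

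For the lower bound, I would work with the classical Wielandt digraph obtained from the Hamiltonian cycle $1 \to 2 \to \cdots \to n \to 1$ by adding a chord that creates a second cycle of length $n-1$. Primitivity is immediate from $\gcd(n, n-1) = 1$, and the Frobenius number of the pair $(n, n-1)$ equals $n^2 - 3n + 1$. A direct calculation of the eccentricities of the $n$ vertices, together with the representability criterion, produces thresholds $r_i$ that, once sorted, take the values $n^2 - 3n + k + 2$ for $k = 1, \ldots, n$; the largest value $n^2 - 2n + 2$ recovers Wielandt's extremal example.

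The principal obstacle is the inductive step in the upper bound. When passing from the $(k-1)$-th to the $k$-th positive row, the next vertex may have a larger eccentricity or be forced to use a less favorable coprime cycle pair, so one must argue carefully that in the worst case the threshold grows by exactly one. The lower bound, by contrast, is a concrete computation once the extremal digraph is fixed, and the reduction of $\exp_k(P)$ to the $k$-th order statistic of $\{r_1, \ldots, r_n\}$ makes the translation to graph-theoretic quantities essentially automatic.
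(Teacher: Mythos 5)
The paper does not actually prove this statement: it is imported verbatim from the cited reference of Brualdi and Liu, so there is no internal proof to compare yours against. Judged on its own terms, your reduction of $\exp_k(P)$ to the $k$th order statistic of the row thresholds $r_1,\dots,r_n$ is correct (primitivity forbids a zero column, so a positive row persists under further multiplication), and your lower bound is sound: the Wielandt digraph really does attain $n^2-3n+k+2$ for every $k=1,\dots,n$, as a direct computation confirms for small $n$.

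The gap is the upper bound, which is the entire content of the theorem. You derive the uniform Wielandt bound $r_i \le n^2-2n+2$ and then propose to ``induct on $k$'' with the claim that the marginal cost of making the $k$th row positive is exactly one step. That claim is not justified and is not a routine refinement: nothing in the Frobenius-number argument forces the smallest threshold to be as low as $n^2-3n+3$ (the $k=1$ base case is itself a nontrivial theorem, the bound on the first local exponent), nor forces the sorted thresholds to increase by at most one per rank thereafter --- the $k$th vertex may indeed have larger eccentricity or only worse coprime cycle pairs available, exactly the difficulty you name but do not resolve. The actual argument of Brualdi--Liu proceeds by a case analysis on the girth $s$ of the primitive digraph: one bounds $\exp_k$ in terms of $s$, the distances from vertices to a fixed shortest cycle, and the number of vertices within a given distance of that cycle, and then treats the regimes $s\le n-2$ and $s=n-1$ separately ($s=n$ being excluded by primitivity). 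Without that structural analysis, or some substitute for it, your inductive step is an assertion of the conclusion rather than a proof of it.
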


Before stating our result, we define the notion of a column that is positive in sufficiently large powers of the matrix. 
\begin{definition}[Eventually Positive Columns] Let $P$ be a stochastic matrix. We say that a column $i$ is \emph{eventually positive} if there is a power $p$ such that the $i$th column of $P^p$ is positive. %it is positive in sufficiently large powers of $P$. This notion is well-defined because 
\end{definition}
We can already notice that the if $P^p$ has a positive $i$th column, then 
$P^{p+1} = P P^p $ has a positive $i$th column as well, and so $P^t$ has a positive $i$th column for any $t \geq p$.

\begin{thm} Let $P$ be an $n \times n$ SIA matrix, and let $k$ be the number of eventually positive columns of $P$. The power $k^2 - 4k + 3 + n$  of $P$ has a positive column.
%This bound is tight in the sense that for any $k, n$ there is an SIA matrix for which the  $\max(k^2 - 3k + 3, n-1) - 1$ does not have a positive column.
\end{thm}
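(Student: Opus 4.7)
The plan is to isolate the combinatorial backbone of $P$ on its eventually positive columns $K$, reduce the problem to a classical local-exponent bound on a primitive matrix, and account separately for the shortest walks of the transient nodes into $K$.

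First, I would argue that $K$ coincides with the unique essential recurrent class of the graph associated with $P$: since $P$ is SIA, $\lim_t P^t$ exists and all of its rows equal the stationary vector $\pi^{\top}$, so a column $j$ is eventually positive if and only if $\pi_j > 0$, and the support of $\pi$ is exactly the unique aperiodic recurrent strongly connected component. Consequently, the $k\times k$ principal submatrix $Q = P|_{K\times K}$ is stochastic, irreducible and aperiodic, hence primitive.

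Next, I apply Theorem \ref{thm:localExp} to $Q^{\top}$ (also primitive and $k \times k$) with index $1$: there exists $s \leq k^2 - 3k + 3$ such that $Q^s$ has at least one positive column, which I call $j^{*} \in K$. A short stochasticity argument shows that $j^{*}$ remains a positive column of $Q^{s+m}$ for every $m \geq 0$, since each row of $Q^m$ has at least one positive entry. Separately, for any $i \notin K$, a shortest walk from $i$ into $K$ is simple and has its interior confined to the $n - k$ transient nodes, so its length $h_i$ satisfies $h_i \leq n - k$.

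To conclude, I set $t = (n - k) + s \leq k^2 - 4k + 3 + n$. For $i \in K$ one has $(P^t)_{i j^{*}} = (Q^t)_{i j^{*}} > 0$; for $i \notin K$, a walk of length $h_i$ from $i$ to some $a_i \in K$ composed with the inequality $(Q^{t - h_i})_{a_i j^{*}} > 0$ (valid because $t - h_i \geq s$) yields $(P^t)_{i j^{*}} > 0$. Thus $P^t$ has positive column $j^{*}$, and because the positive-column property is preserved under left multiplication by $P$, $P^{k^2 - 4k + 3 + n}$ also has a positive column. The main obstacle is the first step: linking the combinatorial definition of $K$ (columns that are positive in some power) to the graph-theoretic essential class and verifying that $Q$ is primitive; once this is in place, the remaining estimates are routine combinations of the local exponent bound with a simple-path argument.
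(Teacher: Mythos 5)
Your proposal is correct and follows essentially the same route as the paper: restrict attention to the set of eventually positive columns, show that this block is closed and primitive, apply Theorem~\ref{thm:localExp} to its transpose to obtain a positive column within $k^2-3k+3$ steps, and add $n-k$ steps for the transient states to reach that block. The only cosmetic difference is that you justify closedness and primitivity via the stationary distribution and the unique aperiodic recurrent class, whereas the paper derives the same facts (the block $B=0$ and $A$ primitive) by a direct block-multiplication argument on a power $P^p$ having $k$ positive columns.
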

\begin{proof} We can assume that the first $k$ columns are eventually positive, and let us partition $P$ in blocks. %Let $p$ be such that $k$ columns of $P^p$ are positive and let us assume that these positive columns are the $k$ first columns. Let us partition $P$ in blocks
$$ P=
\begin{pmatrix} A & B \\
C & D
\end{pmatrix},
$$ with $A$ having a size of $k \times k$ and blocks $B$, $C$ and $D$ having sizes $k \times (n-k)$, $(n-k) \times k$ and $(n-k) \times (n-k)$ respectively. 

\textit{Claim: $A$ is primitive and $B=0$.}
Let $p$ be such that $P^p$ has $k$ positive columns.  If we partition the matrix $P^p$ in blocks, as we did for $P$, we obtain
$$ P^p=
\begin{pmatrix}
E & F \\
G & H
\end{pmatrix},
$$ again with $E$ being $k \times k$. We have that $E$ and $G$ are entrywise positive because we assumed that the first $k$ columns of $P^p$ are positive.
$$P^{p+1} = 
\begin{pmatrix}
EA+FC & EB+FD \\
GA+HC & GB+HD
\end{pmatrix}.
$$
We prove now that $B=0$. Assume to the contrary that $b_{ij}$, the element at position $i,j$ in the matrix $B$, is positive. The $j$th column of $EB$ is equal to $$(EB)_j = \sum_i E_i b_{ij},$$ with $E_i$ being the $i$th column of $E$. Because $E$ is positive, the $j$th column of $EB$ is positive, and similarly, the $j$th column of $GB$ is positive. Therefore, the $(k+j)$th column of $P^p$ is positive, and we have a contradiction because only the first $k$ columns of $P$ are  eventually positive. Hence $B = 0$. Since $B = 0$, we have that $E = A^p$ and we can conclude that $A$ is primitive. 
%Since $E$ and $G$ are positive, $B$ has to be all zero ().
%Since $B=0$, $E = A^p$ and $A$ must be primitive.

Let us now compute $P^{k^2 - 4k + 3 + n} = P^{n-k} P^{k^2 - 3k + 3}$. Let us define $J$ and $K$ as below:
$$P^{n-k} = 
\begin{pmatrix}
A^{n-k} & 0 \\
J & D^{n-k}
\end{pmatrix}
$$ and 
$$P^{k^2 - 3k + 3} = 
\begin{pmatrix}
A^{k^2 - 3k + 3} & 0 \\
K & D^{k^2 - 3k + 3}
\end{pmatrix}
$$
The matrix $A^{k^2 - 3k + 3} $ has a positive column thanks to Theorem~\ref{thm:localExp} applied to the transpose of $A$. Let us assume that the $i$th column is positive. 

We can notice that both $J$ and $A^{n-k}$ have a positive element on each row. $A^{n-k}$ has a positive element on each row because $P$ has a positive element on each row and the first rows of $P$ are $\begin{pmatrix}A & 0\end{pmatrix}$. And
$J$ has a positive element on each row because there is a path of length $n-k$ from each node $k+1, \dots, n$ to some node $1, \dots, k$ in the graph associated to $P$. %the element $q,r $ of $J$ is positive if there is a path of length $n-k$ from node $q$ to node $r$. %$k+1, \dots, n$ to nodes $1, \dots, k$ in the graph associated to $P$.

Therefore $J A^{k^2 - 3k + 3}$ and $A^{n-k} A^{k^2 - 3k + 3} $ have a positive $i$th column and therefore 
$$P^{k^2 - 4k + 3 + n}=
\begin{pmatrix}
A^{n-k} A^{k^2 - 3k + 3} & 0 \\
J A^{k^2 - 3k + 3} + D^{n-k}K  & D^{k^2 - 4k + 3 + n}
\end{pmatrix}
$$ has a positive column. 
%If $k$ columns are positive in sufficiently large powers of $P$, the graph associated to $P$ has a strongly connected component that can be reached from any node. This component contains $k$ nodes and the submatrix associated with it is primitive. \ed{By Theorem~\ref{thm:localExp} applied to the transpose of this submatrix we obtain that its $k^2 - 3k + 3$ power has a positive column.}
%The tightness holds by construction.
\end{proof}
\label{wielk}

%As announced at the beginning of this section, we also have that the $n^2 - 3n + 3$th power of an SIA matrix has a positive column. This is consequence of Theorem \ref{wielk} and of $$\forall k \in \{1, \dots, n\}, \; \max(k^2 - 3k + 3, n-1) \leq n^2 - 3n + 3.$$

\begin{corollary}
The $(n^2 - 3n + 3)$rd power of a $n \times n$ SIA matrix has a positive column.
\end{corollary}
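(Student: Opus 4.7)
The plan is to deduce the corollary directly from Theorem \ref{wielk} by optimising over $k$, the number of eventually positive columns of the given SIA matrix $P$. Since $P$ is SIA, Proposition \ref{SIA_pos} gives $k \geq 1$, and trivially $k \leq n$. Theorem \ref{wielk} then guarantees that $P^{k^2 - 4k + 3 + n}$ has a positive column, and by the monotonicity observation preceding the definition of eventually positive columns (if $P^p$ has a positive column, so does $P^t$ for every $t \geq p$), it suffices to show that
\[
k^2 - 4k + 3 + n \;\leq\; n^2 - 3n + 3
\]
uniformly for $1 \leq k \leq n$.

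I would establish this by the factorisation
\[
(n^2 - 3n + 3) - (k^2 - 4k + 3 + n) = (n-k)(n+k-4),
\]
whose nonnegativity on the range $1 \leq k \leq n$ is immediate as soon as $n \geq 3$, since then $n - k \geq 0$ and $n + k - 4 \geq 0$. So for $n \geq 3$ the corollary follows at once. The isolated boundary case $n = 2$, $k = 1$ (where the factorisation changes sign) I would handle by direct inspection: a short enumeration of zero patterns shows that the only $2 \times 2$ stochastic matrices without a positive column are the identity and the $2 \times 2$ swap, neither of which is SIA, so every $2 \times 2$ SIA matrix already satisfies the bound at power $n^2 - 3n + 3 = 1$. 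The case $n = 1$ is trivial.

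There is no real obstacle; the corollary is a clean squeeze of Theorem \ref{wielk}, with the worst case $k = n$ saturating the stated bound $n^2 - 3n + 3$. The only point requiring any care is the small-$n$ boundary check, which is immediate.
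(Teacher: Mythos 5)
Your proof is correct and follows essentially the same route as the paper: the corollary is deduced from Theorem~\ref{wielk} by maximising $k^2-4k+3+n$ over $1\le k\le n$ and invoking the monotonicity of positive columns under further multiplication. In fact your treatment is more careful than the paper's, which simply asserts $k^2-4k+3+n\le n^2-3n+3$ for all $n\in\mathbb{N}$ and all $k\in\{1,\dots,n\}$ --- an inequality that, as your factorisation $(n-k)(n+k-4)$ makes visible, fails at $n=2$, $k=1$ (left side $2$, right side $1$); your separate check that every $2\times 2$ SIA stochastic matrix already has a positive column (the only patterns without one being the identity and the swap, neither SIA) correctly patches this boundary case, which the paper glosses over.
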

\begin{proof}
%As announced at the beginning of this section, we also have that the $n^2 - 3n + 3$th power of an SIA matrix has a positive column. This is consequence of Theorem \ref{wielk} and of 
This is a consequence of the previous theorem and % $n$ and $k \in \{1, \dots, n\}$ it holds that
$$\forall n \in \mathbb{N}, \; k \in \{1, \dots, n\}, \; k^2 - 4k + 3 + n   \leq n^2 - 3n + 3.$$ %\max(k^2 - 3k + 3, n-1)
%The first inequality follows from the facts that the first derivative of $f(x) = x^2 -3x +3$ is positive for $x>3/2$ and $f(1)=f(2)$.
%The second inequality immediately follows from $(n^2 - 3n + 3) - (n-1) = (n-2)^2 \geq 0$.
\end{proof}

\section{The SIA-index}
\label{sec:sia-index}

In this section we study the SIA-index. 
%We first show that for any dimension, the largest SIA-index is reached by a set of automaton matrices, highlighting the importance of the study of these matrices. \ed{Afterwards, we show that the shortest SIA product of an SIA set }.
We first show that, for any dimension $n$, the largest SIA-index, $\sia(n)$, is reached by a set of automaton matrices. We provide general upper and lower bounds on $\sia(n)$ that depend only on $n$. We describe the results of our numerical experiments to compute exactly $\sia(n)$ for small values of $n$. Finally, we conclude with algorithmic problems related to the computation of the SIA index.
%Then, we look at general upper bounds, that depends on $n$, on the SIA-index. 
%And finally, we give tight upper bounds on the power at which an SIA matrix has a positive column. 

The following result highlights the importance of automaton matrices in the context of SIA matrices.
\begin{prop} 
%For any $n$, the largest SIA-index of all sets of stochastic matrices is equal to the largest SIA-index of sets of automaton matrices: $$B_\text{SIA} = B_\text{SIA, auto}.$$
%%Let $\sia(n)$ be the largest value of the SIA-index among all sets of $n \times n$ stochastic matrices.
For any $n \in \mathbb{N}$, there is a set of matrices $\mathcal{B}_n$ such that $\sia(\mathcal{B}_n) = \sia(n)$ and every matrix of $\mathcal{B}_n$ is an automaton matrix. %has exactly one 1 in each row, i.e., it is a vertex of the polytope of stochastic matrices.
\begin{proof}
In order to construct $\mathcal{B}_n$ we will use techniques from \cite{BJO14}. Let $\mathcal{A}_n$ be a set of $n \times n$ stochastic matrices such that $\sia(\mathcal{A}_n) = \sia(n)$ (such a set exists by definition of $\sia(n)$).
Let \begin{equation}\mathcal{B}_n = \{B \in S_\text{AUTO} \; | \; \exists A \in \mathcal{A}_n, \; B \lesssim A\},\label{assoc_auto}\end{equation} %is defined as the set of all automaton matrices $B$ that satisfy $B \lesssim A$ for some matrix $A \in \mathcal{A}_n$
 where $S_\text{AUTO}$ is the set of automaton matrices, and $B \lesssim A$ means that any zero element in $A$ must be zero in $B$: $a_{ij} = 0 \Rightarrow b_{ij} =0$. 
 
By definition of $\sia(n)$, we have $\sia(\mathcal{B}_n) \leq \sia(n)$. We prove $\sia(\mathcal{B}_n) \geq \sia(n)$.

\textit{Observation 1: The set $\mathcal{B}_n$ has an SIA product.} The set $\mathcal{A}_n$ has a positive-column product. We can use Lemma 1 of \cite{CDC}, stating that for any set of stochastic matrices $\mathcal{A}_n$ that has a positive-column product, the set $\mathcal{B}_n$, defined as in \eqref{assoc_auto} has a positive-column product as well. The positive-column product of the set $\mathcal{B}_n$ is thus SIA by inclusions \eqref{inclusions}. %(by the same ideas as in~\cite{BJO14}) 

\textit{Observation 2: Any SIA product of $\mathcal{B}_n$ has a length larger than or equal to  $\sia(\mathcal{A}_n)$.} Indeed, if $B_1 \dots B_\ell$ is an SIA product, replacing each matrix $B_i$ by a matrix $A_i$ such that $B_i \lesssim A_i$ yields an SIA product of matrices from $\mathcal{A}_n$. 

%\edpy{\textit{Observation 2: The set $\mathcal{B}_n$ has no SIA product shorter than $\sia(n)$.} Indeed, if $\mathcal{B}_n$ had an SIA product $B_1 \dots B_\ell$ shorter than $\sia(n) $, replacing each matrix $B_i$ would give and we would have a contradiction.} 

Therefore, $\sia(\mathcal{B}_n) = \sia(n)$.
\end{proof}
\label{auto_stoch}
\end{prop}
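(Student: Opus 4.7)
The plan is to build $\mathcal{B}_n$ directly from an extremal stochastic set. I would first pick a set $\mathcal{A}_n$ of $n \times n$ stochastic matrices with $\sia(\mathcal{A}_n) = \sia(n)$, which exists by the definition of $\sia(n)$, and then set
$$\mathcal{B}_n = \{B \in S_\text{AUTO} \mid \exists A \in \mathcal{A}_n,\ B \lesssim A\}.$$
This is a nonempty set of automaton matrices, since any stochastic matrix has a positive entry in each row, so one can always extract an automaton matrix whose support lies inside it. The remaining task is to check that $\mathcal{B}_n$ is SIA and that $\sia(\mathcal{B}_n) = \sia(n)$.

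For the easy inequality $\sia(\mathcal{B}_n) \le \sia(n)$, the plan is first to establish that $\mathcal{B}_n$ is SIA. By Proposition~\ref{prop:existence}, since $\mathcal{A}_n$ is an SIA set it admits a positive-column product, and one can then invoke Lemma~1 of~\cite{CDC}, which states precisely that if a stochastic set admits a positive-column product, so does its associated automaton set constructed by the support-restriction recipe above. The inclusions~\eqref{inclusions} then imply that $\mathcal{B}_n$ is SIA, and $\sia(\mathcal{B}_n) \le \sia(n)$ is immediate from the definition of $\sia(n)$ as a maximum over all $n\times n$ SIA stochastic sets.

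The reverse inequality is the substantive direction. Given any SIA product $B_1 B_2 \cdots B_\ell$ of matrices from $\mathcal{B}_n$, I would lift it by selecting for each $i$ some $A_i \in \mathcal{A}_n$ with $B_i \lesssim A_i$ (which exists by the construction of $\mathcal{B}_n$), and argue that the lifted product $A_1 A_2 \cdots A_\ell$ is SIA. The key observation is that $B_i \lesssim A_i$ entrywise implies, by a straightforward induction using $(BC)_{ij} = \sum_k B_{ik} C_{kj}$, that the support of $B_1 \cdots B_\ell$ is contained in the support of $A_1 \cdots A_\ell$; the same containment then passes to arbitrary powers. Invoking Proposition~\ref{SIA_pos}, a positive column in some power of $B_1 \cdots B_\ell$ transfers to a positive column in the corresponding power of $A_1 \cdots A_\ell$, so the lifted product is SIA. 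This produces a length-$\ell$ SIA product in $\mathcal{A}_n$, forcing $\ell \ge \sia(\mathcal{A}_n) = \sia(n)$.

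The only real obstacle is making the support-inheritance argument rigorous without additional machinery; once Proposition~\ref{SIA_pos} is invoked, SIA-ness reduces to a combinatorial condition on the nonzero pattern, and the rest follows directly from the relation $B_i \lesssim A_i$. Referencing Lemma~1 of~\cite{CDC} for the first direction keeps the proof short and self-contained relative to the paper.
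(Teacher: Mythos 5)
Your proposal is correct and follows essentially the same route as the paper: the same construction of $\mathcal{B}_n$ via $\lesssim$, the same appeal to Lemma~1 of~\cite{CDC} together with the inclusions~\eqref{inclusions} to show $\mathcal{B}_n$ is SIA, and the same lifting of an SIA product $B_1\cdots B_\ell$ to $A_1\cdots A_\ell$ for the reverse inequality. Your explicit support-containment argument via Proposition~\ref{SIA_pos} simply fills in the detail the paper leaves implicit in its Observation~2.
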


\subsection{Bounds on the SIA-index}
\label{sec:bound}

%In this section we present upper bounds on $\sia(n)$, the largest SIA-index of sets of $n \times n$ stochastic matrices. We first provide a cubic bound. We conjecture that \edpy{$\sia(n)$} grows linearly with the dimension of the matrices. We search sets of matrices with large SIA-index.
%In this search, we restrict ourselves to automata \edpy{matrices} because Proposition~\ref{auto_stoch} guarantees that for any given size, there is a set of automaton matrices that has the largest possible SIA-index.

\subsubsection{Upper bounds}

%When restricted to automata, the SIA-index is the shortest $\ell$ such that there exists a positive-column product $P^{n-1}$ where $P$ is a product of length $\ell$ (by Proposition~\ref{SIA_auto}). %a reset word $u^{n-1}$, where $|u|=\ell$.
%Surprisingly, the SIA-index tends to be small for automata with large reset thresholds: Rystsov's automata (introduced in \cite{RYSTSOV1997}), \v{C}ern\'{y} automata (presented in \cite{volkov_survey}), and other slowly synchronizing automata~\cite{AGV2013} have small SIA-indices. 

Propositions~\ref{SIA_auto} and~\ref{auto_stoch} tell us that there are tight connections between the SIA sets of matrices and synchronizing automata theory. Such connections were already utilized in
 \cite{CDC}, where the authors proved that the largest positive column index, $\pc(n)$, is equal to the largest
 %bound on the
 reset threshold among automata of size $n$. %In particular, if the  \v{C}ern\'{y} conjecture is true, then $\pc(n) = (n-1)^2$.}
 The best known upper bound on the reset threshold of automata of size $n$ is $\frac{n^3 - n}{6}$ \cite{Pin83a, Frankl82}. We can combine it with
 %This can be combined with 
 inequality \eqref{ineq2} to derive
 %provide
 an upper bound on the SIA-index:
\[ \sia(n) \leq \frac{n^3 - n}{6}.\]
In fact, we believe that $\sia(n)$ is much smaller.
\begin{conj}
\label{conj:sia}
The SIA-index of a set of $n \times n$ stochastic matrices is bounded by $2n$.
\end{conj}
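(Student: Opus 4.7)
The plan is to exploit the reduction to automata made in Proposition~\ref{auto_stoch}: it suffices to bound $\sia(\mathcal{A})$ by $2n$ for every SIA set $\mathcal{A}$ of automaton matrices, viewed as functions $f_a : Q \to Q$ on $n$ states. Combining Propositions~\ref{SIA_pos} and~\ref{coincide}, the composition $f_w$ is SIA if and only if its functional digraph has a unique cycle and that cycle is a self-loop at some state $s$; equivalently, $s$ is a fixed point of $f_w$ and every other state reaches $s$ within at most $n-1$ iterations of $f_w$. This structural reformulation is what we will try to achieve with a word of length $\leq 2n$.

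With this target, I would pursue a two-phase construction. \textbf{Phase I (finding a fixed point).} Identify a candidate absorbing state $s \in Q$ and a word $u$ with $|u| \leq n$ such that $f_u(s) = s$. A natural starting point is the labelled digraph $\mathcal{G}$ on $Q$ with arcs $q \xrightarrow{a} f_a(q)$ for each $a \in \mathcal{A}$: every closed walk of length at most $n$ produces such a word, and at least one such walk exists in any SIA set because any state forced into the image of some positive-column product must lie on a short cycle of $\mathcal{G}$. \textbf{Phase II (collapsing into $s$).} Extend $u$ by a word $v$ of length $\leq n$ to obtain $w = uv$ with the property that iterating $f_w$ drives every state into $s$. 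The idea is that each additional letter can eliminate one remaining recurrent structure in the functional graph of the partial composition, and since there are at most $n-1$ states outside the fixed point, $n$ letters should suffice if the letters can be chosen greedily.

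The crux, and the main obstacle, is that the two phases interact: after Phase~II, the fixed point established in Phase~I must remain a fixed point of $f_{uv}$, which is a nontrivial constraint on the letters of $v$. It is precisely this interaction that prevents a naive inductive ``remove-one-state-at-a-time'' argument from working and that forces the current unconditional bound back to $O(n^3)$ via the reset-threshold detour of Equation~\eqref{ineq2}. Circumventing this obstacle will likely require identifying, inside an arbitrary SIA set of automaton matrices, a distinguished invariant substructure---perhaps a ``spine'' of states that is preserved by enough letters to allow simultaneous fixation and collapse. A promising direction is to analyse the lower-bound examples matching $\sia(n) \geq n$ and to look for a dual ``extremal'' structure that forces any SIA set to admit such a spine.

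Given that the statement is stated as Conjecture~\ref{conj:sia} and at present is only supported by exhaustive computer search for small $n$, I expect that a full proof will require a genuinely new structural insight about SIA sets, beyond the pigeonhole and Frobenius-type arguments underlying the Cern\'y-style $O(n^3)$ bound. The two-phase outline above should at the very least yield an intermediate improvement (a bound of the form $Cn$ for some absolute $C$), which would already be a substantial advance and, combined with Proposition~\ref{SIA_auto}, would push the Cern\'y bound closer to quadratic.
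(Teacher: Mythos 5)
The statement you are addressing is stated in the paper as Conjecture~\ref{conj:sia}: the paper itself offers no proof, only exhaustive computations up to $n=9$, lower-bound families ($\sia(\mathcal{C}_n)=n$ and $\sia(\mathcal{W}_n)=n-1$), and the much weaker proven bound $\sia(n) \leq \frac{n^3-n}{6}$ obtained by combining inequality~\eqref{ineq2} with the cubic reset-threshold bound. Your proposal is likewise not a proof, and you say so yourself; what you give is a programme. The reduction to automaton matrices via Proposition~\ref{auto_stoch} and the structural reformulation of ``$f_w$ is SIA'' as ``the functional digraph of $f_w$ has a unique cycle, which is a self-loop'' are both correct and consistent with the paper's viewpoint (Propositions~\ref{SIA_pos}, \ref{coincide} and~\ref{SIA_auto}). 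Phase~I is essentially trivial --- pigeonhole on the orbit of any state under a single letter already yields a word $u$ of length at most $n$ with a periodic point --- so it carries no real content.

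The genuine gap is Phase~II. The assertion that ``each additional letter can eliminate one remaining recurrent structure'' while keeping $s$ fixed is precisely the statement that needs proof, and no mechanism is offered: appending a letter changes the composed map globally, can move the image of $s$, and can create new cycles among the remaining states, so there is no monotone quantity that demonstrably decreases by one per letter. This is the same obstruction that has kept the \v{C}ern\'{y} conjecture open. Indeed, by Propositions~\ref{SIA_auto} and~\ref{auto_stoch}, even the ``intermediate'' goal you mention --- any bound of the form $Cn$ on $\sia(n)$ --- would immediately give an $O(n^2)$ bound on the reset threshold of synchronizing automata, a breakthrough well beyond the current cubic state of the art; so one should not expect the two-phase outline to yield even that weaker claim without a substantially new structural idea. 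Verdict: the proposal correctly locates the difficulty but does not prove the statement, which remains, in the paper as well, an open conjecture supported only by numerical evidence.
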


In the next subsections we will support this conjecture by providing results of computational experiments and analysis of sets that are extremal for the \v{C}ern\'{y} conjecture.
We believe that Conjecture~\ref{conj:sia} offers a new angle on the \v{C}ern\'{y} conjecture and can bring new insights. First, by Proposition~\ref{SIA_auto} the SIA-index of automaton matrices is the shortest $\ell$ such that there exists a positive-column product $P^{n-1}$, where $P$ is a product of length $\ell$. In other words, if Conjecture~\ref{conj:sia} is true, then a synchronizing automaton of size $n$ has a reset threshold at most $2n(n-1)$, which is a significant improvement of the state of the art.
Second, the SIA-index tends to be surprisingly small for automata with large reset thresholds highlighting the structural properties of these particular cases: Rystsov's automata \cite{RYSTSOV1997}, \v{C}ern\'{y} automata \cite{volkov_survey}, and other slowly synchronizing automata~\cite{AGV2013} have small SIA-indices.
%Recall that When restricted to automata, the SIA-index is the shortest $\ell$ such that there exists a positive-column product $P^{n-1}$ where $P$ is a product of length $\ell$ (by Proposition~\ref{SIA_auto}).

%As we will see in the next subsection, this conjecture implies a quadratic bound on the \v{C}ern\'{y} conjecture.

%\ed{Surprisingly, the SIA-index tends to be small for automata with large reset thresholds: Rystsov's automata (introduced in \cite{RYSTSOV1997}), \v{C}ern\'{y} automata (presented in \cite{volkov_survey}), and other slowly synchronizing automata~\cite{AGV2013} have small SIA-indices. }

\subsubsection{Numerical results}
We now present the results of our computational experiments that support our Conjecture~\ref{conj:sia}.
%We now present experimental results that support our conjecture. 
Since the bound on SIA-index for automaton matrices is equal to the bound on the SIA-index for stochastic matrices (Proposition \ref{auto_stoch}), we only investigate automaton matrices. We have computed on a computer cluster the SIA-index of all automata made of two matrices up to $n = 7$, and up to $n=9$ for \emph{initially connected automata}, a notion that we will define soon. The results are summarized in Table \ref{table1}.
We have done the same for all triplets of automaton matrices up to $n=5$ (Table \ref{table1}), and we obtain exactly the same maximum SIA-indices. 
The maximum SIA-index grows approximately like $2n$, as shown in Figure \ref{2n}.
Examples  of sets of pairs of  $8 \times 8$ and $9 \times 9$ matrices that have an SIA-index of $15$ and $16$ are depicted on Figure \ref{ex8}. %For $n=9$, there were several examples of sets having a maximal SIA-index, but unfortunately we were not able
%an workable structure in any of them.

\begin{table}[h!]
\centering
\begin{tabular}{l|p{3.3cm}|p{3.3cm}|p{3.3cm}}
   $n$ & maximum SIA-index, two matrices, all automata & maximum SIA-index, two matrices, IC automata. & maximum SIA-index, three matrices, IC automata\\
\hline
	1 & 0 & 0 & 0 \\
	2 & 1 & 1 & 1 \\
	3 & 3 & 3 & 3 \\
	4 & 5 & 5 & 5 \\
	5 & 8 & 8 & 8 \\
	6 & 10 & 10 & \\
	7 & 13 & 13 & \\
	8 &  & 15 & \\
	9 &  & 16 &
\end{tabular}
\caption{Exhaustive tests for pairs and triplets of automaton matrices. The first column is the size of the matrices, the second column is the maximum SIA-index of two-matrices automata, the third column is the maximum SIA-index of two-matrices initially connected automata (see Definition \ref{IC})}, and the fourth column is the maximum SIA-index of three-matrices automata.
\label{table1}
\end{table}

\begin{figure}[h!]
\centering
\includegraphics[scale=.7]{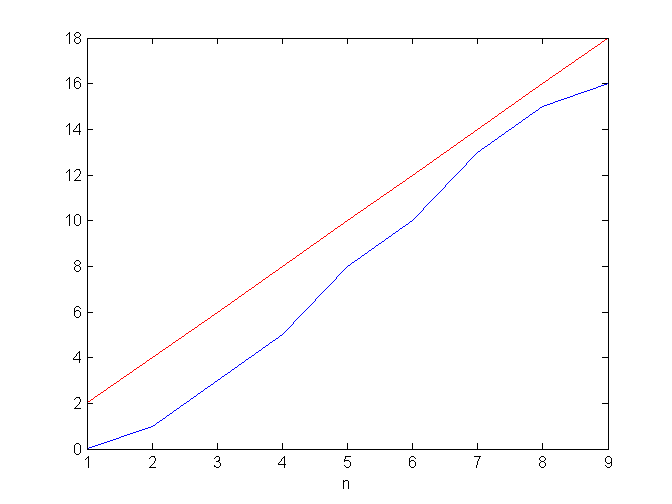}
\caption{Maximum SIA-index for pairs of matrices (blue) and the curve $y=2n$ (red).}
\label{2n}
\end{figure}

\begin{figure}[h!]
\centering
\begin{subfigure}{7 cm}
  \centering
  \includegraphics[scale=.5]{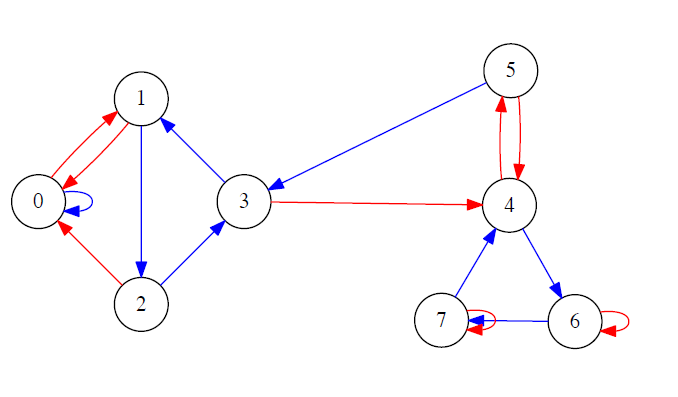}
  \caption{Pair of $8 \times 8$ matrices that has an SIA-index of 15. This is the only such pair up to relabelling of the nodes.}
  \label{fig:sub1}
\end{subfigure}%
\hspace{.5cm}
\begin{subfigure}{7cm}
  \centering
  \includegraphics[scale=.5]{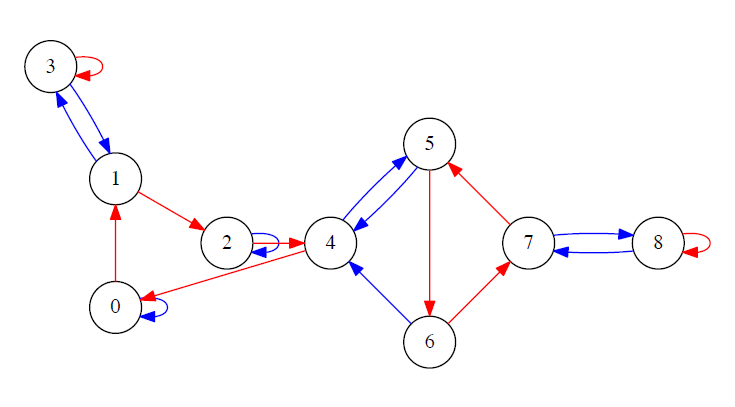}
  \caption{Pair of $9 \times 9$ matrices that has an SIA-index of 16. There are 12 different (up to relabelling of the nodes) sets that have an SIA-index of 16.}
  \label{fig:sub2}
\end{subfigure}
\caption{Examples of matrix sets that have extremal SIA-index. The matrices are represented by their graphs, the blue edges represent the positive elements of one matrix and the red edges the other matrix. }
\label{ex8}
\end{figure}

Our methodology was the following. For the exhaustive tests, we have enumerated all sets of automaton matrices and for each 
%automaton
set we have computed its SIA-index.
%, we have enumerated a set products and tested if there were SIA. The products that we checked are the products corresponding to \emph{Lyndon words}, that we will introduce next. We will prove in Proposition \ref{prop:Lyndon} that we compute the correct SIA-index for each automaton, even if we do not check all products. 
%In the exhaustive tests, t
In order to compute the SIA-index, we have enumerated
all matrix products corresponding to Lyndon words (defined below) and looked for the shortest SIA product of this form. %instead of all products. 
Proposition \ref{prop:Lyndon} guarantees that the correct SIA-index is computed.
%Checking only products corresponding to Lyndon words is equivalent to checking all automata and all products according to Proposition~\ref{prop:Lyndon}. 

%\ed{First, let us show that every SIA set has a SIA product of minimal length satisfying some special properties. Assume now that the elements of a matrix set $\mathcal{S}=\{A_1, A_2, \ldots, A_k\}$ are linearly ordered in an arbitrary fashion, e.g. $A_1 \prec A_2 \prec \ldots \prec A_k$. %We will say that the product $P$
%The \emph{lexicographic order} on the products of matrices from $\mathcal{S}$ is defined as follows: $P \prec Q$ if and only if either  $Q = PU$ for some product $U$; or $P=UA_iV$ and $Q=UA_jW$ for $A_i \prec A_j$ and some products $U,V,W$ of matrices from $\mathcal{S}$.
%We will now adopt certain automata terminology. An \emph{alphabet} is a finite set
%}
%We will use the natural correspondence between products and words over an alphabet in which each letter represents a matrix. We will also use the term \emph{SIA word} for a word that corresponds to an SIA-product.

Let us now define Lyndon words. We will treat now the elements of a matrix set $\mathcal{S}=\{A_1, A_2, \ldots, A_k\}$ as \emph{symbols} (or \emph{letters}) that are linearly ordered in an arbitrary fashion, e.g. $A_1 \prec A_2 \prec \ldots \prec A_k$. Clearly, products of matrices from $\mathcal{S}$ are in one-to-one correspondence with \emph{words} -- the sequences of symbols $A_1,A_2,\ldots,A_k$. 
%Every such sequence that we will call a \emph{word}. %We will say that the product $P$
%The \emph{lexicographic order} on the products of matrices from $\mathcal{S}$ is defined as follows: $P \prec Q$ if and only if either  $Q = PU$ for some product $U$; or $P=UA_iV$ and $Q=UA_jW$ for $A_i \prec A_j$ and some products $U,V,W$ of matrices from $\mathcal{S}$.}
The \emph{lexicographic order} on the words is defined as follows: $P \prec Q$ if and only if either  $Q = PU$ for some word $U$; or $P=UA_iV$ and $Q=UA_jW$ for $A_i \prec A_j$ and some words $U,V,W$.

\begin{definition}[Lyndon word] A \emph{cyclic shift} of a word $P$ is a word of the form $VU$ if $P=UV$.
A non-empty word $P$ is Lyndon if it is strictly smaller in the lexicographic order than all of its cyclic shifts.
\end{definition}

%\ed{Lyndon words are well-studied  due to their applications in combinatorics on words, semigroup theory, Lie algebras, etc. (see~\cite[Chapter 5]{lothaire1997combinatorics} for an introduction).}

\begin{prop}
\label{prop:Lyndon}
Let $\mathcal{S}$ be an SIA set of matrices such that $\sia(\mathcal{S}) = \ell$. There is a Lyndon word of length $\ell$ such that the corresponding product of matrices from $\mathcal{S}$ is SIA.
\end{prop}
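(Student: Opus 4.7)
The plan is to combine two ingredients: the SIA property is invariant under cyclic rotation of a product, and the lexicographically smallest cyclic rotation of a minimal SIA word must be Lyndon, since otherwise one could extract a shorter SIA product from it.

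\textbf{Step 1 (cyclic invariance).} First I would show that if $P = UV$ is an SIA product, where $U$ and $V$ are themselves (possibly empty) products of matrices from $\mathcal{S}$, then $Q = VU$ is also SIA. By Proposition~\ref{SIA_pos}, some power $P^p$ has a positive column, say column $k$. Then $Q^{p+1} = V(UV)^p U = V P^p U$. I would argue that left-multiplying $P^p$ by the stochastic matrix $V$ preserves the positive $k$-th column, because each entry in column $k$ of $V P^p$ is a convex combination (with coefficients summing to $1$) of strictly positive numbers. To handle the right multiplication, I would pick an index $j$ with $U_{kj} > 0$, which exists because row $k$ of $U$ sums to $1$: this makes column $j$ of $V P^p U$ entrywise positive, and Proposition~\ref{SIA_pos} applied in the reverse direction yields that $Q$ is SIA. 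Iterating this single-shift argument gives cyclic invariance of the SIA property along any factorization of the product into consecutive matrices.

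\textbf{Step 2 (Lyndon via minimality).} Next I would take a word $w$ of length $\ell = \sia(\mathcal{S})$ whose associated product is SIA, and choose a lexicographically minimal cyclic rotation $w'$ of $w$. By Step~1, $w'$ still corresponds to an SIA product, and it has length $\ell$. If $w'$ is strictly smaller than all of its nontrivial cyclic rotations, then it is Lyndon by definition and we are done. Otherwise $w'$ coincides with one of its nontrivial rotations, and a standard fact in combinatorics on words forces $w' = v^k$ for some word $v$ of length $\ell/k < \ell$ and some integer $k \geq 2$. Writing $A$ for the matrix product corresponding to $v$, we would then have $A^k$ SIA; applying Proposition~\ref{SIA_pos} to $A^k$ gives a positive column of $(A^k)^p = A^{kp}$, and a second application of the same proposition to $A$ yields that $A$ itself is SIA. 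Since $v$ has length less than $\ell$, this contradicts the definition of $\sia(\mathcal{S})$, so $w'$ must be Lyndon.

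\textbf{Expected obstacle.} The only substantive step is the cyclic invariance in Step~1: the positive column has to be tracked through both a left and a right multiplication, and these use stochasticity in two slightly different ways (convex combinations of strictly positive numbers on the left, existence of a positive entry in row $k$ of $U$ on the right). A spectral shortcut via the identity of the nonzero spectra of $UV$ and $VU$ is tempting, but pinning down the geometric multiplicity of the eigenvalue $1$ and the rank-one limit of $(VU)^t$ by that route looks messier than the direct positive-column argument via Proposition~\ref{SIA_pos}. Once cyclic invariance is established, Step~2 is a short combinatorial-on-words observation and the result follows.
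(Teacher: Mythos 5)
Your proof is correct and follows essentially the same route as the paper: cyclic shifts preserve the SIA property, the lexicographically minimal rotation must be aperiodic by minimality of $\ell$, and an aperiodic word that is minimal among its cyclic rotations is Lyndon. The only (minor) difference is in the cyclic-invariance step, where the paper observes directly that $(VU)^{t+1}=V(UV)^tU \to VQU$ with $Q$ rank one, while you track a positive column through Proposition~\ref{SIA_pos}; both arguments are valid.
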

\begin{proof}

Let $P$ be a word corresponding to an SIA product of length $\ell$.
Observe first that any cyclic shift of $P$ leads to an SIA product as well: if the powers of the matrix associated with $P=UV$ converge to the rank one matrix $Q$, then the matrices associated with $V(UV)^tU$ converge to rank one matrix $VQU$ as $t \to \infty$.   %Therefore if $w$ is an SIA word of length $\ell$, there is a cyclic shift $w'$ of $w$ that is also an SIA word of length $\ell$ and that is smaller or equal in lexicographic order than all its cyclic shifts. Now if $w'$ is equal in lexicographic order to one of its non-trivial cyclic shift, then $w' = $
Thus, we can always choose a word $P'$ corresponding to an SIA product that is smaller or equal than any of its cyclic shifts in the lexicographic order. 
Secondly, if $P'$ has the form $P'=U^t$ for some $t$, then $U$ clearly corresponds to an SIA product. Since $\sia(\mathcal{S}) = \ell$, we immediately conclude that $t=1$, i.e. $P'$ is aperiodic.
 Now we can invoke the classical result about the Lyndon words stating that an aperiodic word that is not larger than any of its cyclic shifts is actually Lyndon~\cite[Proposition 5.1.2]{lothaire1997combinatorics} and the proposition follows. 
\end{proof}

We have noticed in our tests that all extremal examples are \emph{initially connected} (in fact, even strongly connected). Therefore, we have decided to analyze larger values of $n$ 
restricted to the case of %but to restrict ourselves to
initially connected automaton matrices. This has allowed us to perform the tests up to  $n = 9$ instead of $n = 7$.
\begin{definition}[Initially Connected] A set of automaton matrices $\{P_1, \dots, P_m\}$ is called \emph{initially connected} or \emph{IC} if in the graph associated with the matrix $P_1 + \dots + P_m$  there exists a node $q$ such that there is a path from $q$ to any node of the graph. In particular, if the graph associated to $P_1 + \dots + P_m$ is strongly connected, the set is initially connected. 
\label{IC}
\end{definition}

\subsubsection{Lower bounds}
Proposition~\ref{SIA_auto} implies that automaton matrices with a small SIA-index have a small reset threshold as well. Therefore, we focused on automata that are known to be tight for the \v{C}ern\'{y} conjecture. The results are summarized in~Fig.\ref{fig:CernyTightSIA}.
%Finally, we have also analyzed the automata that are known to be tight for the \v{C}ern\'{y} conjecture and we obtained the following, \ed{see~ Fig.\ref{fig:CernyTightSIA}}.
A list of automata that are known to be tight for the \v{C}ern\'{y} conjecture can be found in \cite{volkov_survey}.

\begin{figure}[h!]
\centering
\vspace{.5cm}
\begin{tabular}{l|l}
   Automaton & SIA-index \\
\hline
   \v{C}ern\'{y} family (with $n\geq 3$) & $n$  \\
%   \ed{Wielandt family (with $n\geq 3$)} & $n-1$  \\
   3 states (3 different automata) & 3 \\
   4 states (3 automata) & 5 \\
   5 states & 7 \\
   6 states (Kari automaton) & 9
\end{tabular}
\vspace{.5cm}
\caption{SIA-indices of automata that are known to be tight for the \v{C}ern\'{y} conjecture.}
\label{fig:CernyTightSIA}
\end{figure}

The \v{C}ern\'{y} family is the only known infinite series of automaton matrices with the shortest positive-column product of length $(n-1)^2$~ \cite{volkov_survey}. For $n=4$, the \v{C}ern\'{y} set is equal to 
%The \v{C}ern\'{y} family is a family of automata known to have a shortest positive-column product equal to $(AB^{n-1})^{n-2}A$. For $n=4$, the \v{C}ern\'{y} automaton is equal to 
$$\mathcal{C}_n = \left \{A=
\begin{pmatrix}
0 & 1 & 0 & 0 \\
0 & 0 & 1 & 0 \\
0 & 0 & 0 & 1 \\
1 & 0 & 0 & 0
\end{pmatrix}, B = 
\begin{pmatrix}
0 & 1 & 0 & 0 \\
0 & 1 & 0 & 0 \\
0 & 0 & 1 & 0 \\
0 & 0 & 0 & 1
\end{pmatrix}
\right\}.
$$ We will see in the following proposition that its shortest SIA product is equal to $AB^{n-1}$, which implies  $\sia(n) \geq n$. For small values of $n$, the values computed in Table \ref{table1} provide slightly better lower bounds.
%also provide lower bounds. 

\begin{prop}
The \v{C}ern\'{y} set $\mathcal{C}_n$ of matrices of dimension $n$ has an SIA-index of $\sia(\mathcal{C}_n) = n$.
\end{prop}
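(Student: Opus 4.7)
The plan is to establish the two inequalities $\sia(\mathcal{C}_n) \le n$ and $\sia(\mathcal{C}_n) \ge n$ separately, relying on the shift structure of the letters. For the upper bound I would exhibit a length-$n$ SIA product; a natural candidate is $P = BA^{n-1}$. A direct computation shows that $P$ acts on $\{1,\dots,n\}$ as the function $1 \mapsto 1$ and $k \mapsto k-1$ for $k \ge 2$, so its functional graph is a directed path $n \to n-1 \to \cdots \to 2$ terminating in a self-loop at $1$. Every state therefore reaches $1$ in at most $n-1$ iterations, so $P^{n-1}$ has the first column everywhere positive, and by Proposition~\ref{SIA_pos} the matrix $P$ is SIA.

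For the lower bound, the key is that each letter of $\mathcal{C}_n$ acts additively on $\mathbb{Z}/n\mathbb{Z}$: $A$ is the shift $s \mapsto s+1$, while $B$ is a \emph{conditional} shift, sending $s$ to $s+1$ when $s=1$ and fixing $s$ otherwise. Consequently, for any word $w$ of length $\ell = a+b$ (with $a$ copies of $A$ and $b$ copies of $B$), tracing the trajectory of an initial state $s$ yields
\[
w(s) \equiv s + a + b_s \pmod{n},
\]
where $b_s \in \{0,1,\dots,b\}$ counts the $B$-letters in $w$ that fire on the trajectory of $s$, i.e.\ those applied while the current position equals $1$.

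Now suppose $w$ is SIA with $\ell < n$. By Proposition~\ref{SIA_pos}, some power $w^p$ is a constant function with value $s^*$; the identity $w^{p+1} = w \circ w^p = w^p \circ w$ then forces $w(s^*)=s^*$, so $s^*$ is a fixed point of $w$. The fixed-point condition reads $a + b_{s^*} \equiv 0 \pmod n$, while $0 \le a + b_{s^*} \le a + b = \ell < n$; hence $a + b_{s^*} = 0$, and in particular $a=0$. Then $w$ is a power of $B$, equal to $B$ itself by idempotency, but the image of $B$ is $\{2,\dots,n\}$ of size $n-1 \ge 2$, so $B$ is not SIA---a contradiction. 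This forces $\ell \ge n$, and combined with the upper bound yields $\sia(\mathcal{C}_n) = n$. The only potentially delicate step is the reduction from SIA to the existence of a fixed point of $w$ itself; everything else is a clean bookkeeping of modular shifts, so I do not expect serious technical obstacles.
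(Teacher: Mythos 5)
Your proof is correct, but your lower bound takes a genuinely different route from the paper's. The paper first derives $\sia(\mathcal{C}_n)\ge n-1$ from the general inequality $(n-1)\,\sia(\mathcal{C}_n)\ge \pc(\mathcal{C}_n)=(n-1)^2$ (Proposition~\ref{SIA_auto}), and then excludes the value $n-1$ by invoking an external counting result of Gusev on the exact number of occurrences of each letter in any reset word of $\mathcal{C}_n$, combined with the cyclic-shift invariance of SIA products (Proposition~\ref{prop:Lyndon}) to reduce the remaining case to a single explicit word. Your argument instead tracks the total displacement of a trajectory modulo $n$: an SIA word must have a fixed point (your $w^{p+1}$ observation is a clean way to get this from Proposition~\ref{SIA_pos}), the fixed-point condition forces $a+b_{s^*}\equiv 0 \pmod n$, and a word of length less than $n$ can only satisfy this with $a=0$, which degenerates to the non-SIA transformation $B$. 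This is self-contained, avoids both the reset-threshold machinery and the Lyndon-word reduction, and gives the bound in one step; the paper's version, by contrast, showcases the $(n-1)\,\sia \ge \pc$ connection that is a theme of the article. Two small remarks: your witness $BA^{n-1}$ is the one consistent with the displayed matrices (where $A$ is the cyclic shift and $B$ the conditional map) --- the paper's proof writes $AB^{n-1}$ and $(AB^{n-1})^{n-2}A$, which tacitly swaps the roles of the two letters relative to its own display, so your version is actually the correctly labelled one; and both proofs implicitly assume $n\ge 3$ (for $n=2$ the matrix $B$ alone is already SIA, so the index is $1$), which is consistent with the paper's table restricting the \v{C}ern\'{y} family to $n\ge 3$.
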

\begin{proof} 
Observe first that $(AB^{n-1})^{n-2}A$ has a positive-column and $AB^{n-1}$ is SIA. Thus, $\sia(\mathcal{C}_n) \leq n$.
Furthermore, since the shortest positive-column product of  $\mathcal{C}_n$ has length $(n-1)^2$, see e.g.~\cite{Gu13}, and due to Proposition~\ref{SIA_auto} we have $(n-1) \sia(\mathcal{C}_n)  \geq \pc(\mathcal{C}_n)$. Therefore, $\sia(\mathcal{C}_n) \geq (n-1)$.

It remains to show now that the case $\sia(\mathcal{C}_n) = (n-1)$ is impossible. Assume to the contrary that $P$ is an SIA product of length $n-1$. Therefore, $P^{n-1}$ has a positive column (by Proposition~\ref{SIA_auto}). By~\cite[Proposition 4]{Gu13} every positive-column product of $\mathcal{C}_n$ has at least of $n^2-3n+2$ occurrences of $A$ and $n-1$ occurrences of $B$, thus, $P$ has exactly one occurrence of $A$ and $n-2$ occurrences of $B$. Applying Proposition~\ref{prop:Lyndon} we further conclude that the word $A^{n-2}B$ corresponds to an SIA product as well, which is not the case.

%Since $(AB^{n-1})^{n-2}A$ is positive-column, $AB^{n-1}$ is SIA and has a length of $n$. Hence $\sia(\mathcal{C}_n) \leq n$. 

%Recall that the shortest positive-column product of  $\mathcal{C}_n$ has length $(n-1)^2$. Since $(n-1) \sia(\mathcal{C}_n)  \geq \pc(\mathcal{C}_n)$ (by Proposition~\ref{SIA_auto}), we have $(n-1) \sia(\mathcal{C}_n)  \geq (n-1)^2$ and $\sia(\mathcal{C}_n) \geq (n-1)$. %furthermore, the matrix $B$ occurs exactly $n-1$ times in it~\cite{Gu13}.
%Since $(n-1) \sia(\mathcal{C}_n)  \geq (n-1)^2$, we immediately conclude that $\sia(\mathcal{C}_n) \geq (n-1)$.

%Suppose now that $\sia(\mathcal{C}_n) = (n-1)$. \edpy{Then, there is an SIA product $P$ of length $n-1$. By Proposition~\ref{SIA_auto}, $P^{n-1}$ is positive-column. But  $(AB^{n-1})^{n-2}A$ is known to be the unique (up to cyclic shift) positive-column product of length $(n-1)^2$ of matrices from $\mathcal{C}_n$. $P$ must therefore be a subword of $(AB^{n-1})^{n-2}A$, } %Since $(AB^{n-1})^{n-2}A$ is the only positive-column product of length $(n-1)^2$, the SIA product of length $n-1$ must be a subword of it, 
%which is not possible ($(AB^{n-1})^{n-2}A$ is not the concatenation of $n-1$ times the same product of length $n-1$). %Due to the constraints on the number of occurrences of matrices $A,B$ and Proposition~\ref{prop:Lyndon}, we conclude that the only suitable product of the required length is $AB^{n-2}$, which is not SIA. Therefore, $\sia(\mathcal{C}_n) = n$.
\end{proof}

Now we will analyze a set of matrices derived from the Wielandt series of matrices that have the largest possible  exponent among $n \times n$ matrices~\cite[Chapter 3.5]{BrualdiRyser1991CombinatorialMatrixTheory}.
Matrix sets of this kind often appear in the study of combinatorial characteristics of matrix sets, e.g. generalizations of the exponents~\cite{SHADER03} or in the study of positive-column indicies~\cite{AGV2013}. We define the Wielandt set of automaton matrices as
%The Wielandt automaton is also know to have a large reset threshold and again, we obtain an SIA-index that grows linearly with $n$.
$$\mathcal{W}_n = \left\{A=\begin{pmatrix}
& 1 &&& \\
&& 1 && \\
&&& \ddots & \\
&&&& 1 \\
0 & 1 &&
\end{pmatrix}, B=\begin{pmatrix}
& 1 &&& \\
&& 1 && \\
&&& \ddots & \\
&&&& 1 \\
1 & 0 &&
\end{pmatrix}\right\}, $$ where the omitted elements are zeros.

\begin{prop} The Wielandt set of matrices $\mathcal{W}_n$ has an SIA-index of $\sia(\mathcal{W}_n) = n-1$.
\end{prop}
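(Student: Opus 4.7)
My plan is to prove the two inequalities $\sia(\mathcal{W}_n) \leq n-1$ and $\sia(\mathcal{W}_n) \geq n-1$ separately, and the central observation that drives both parts is that the matrices $A$ and $B$ of $\mathcal{W}_n$ are both equal to the cyclic shift $g \colon i \mapsto (i \bmod n) + 1$ except at the state $n$, where $B(n)=1$ and $A(n)=2$. Thus $A$ differs from the pure shift by an additional $+1 \pmod n$ at the single state $n$. Throughout, I will view automaton matrices as maps $\{1,\dots,n\}\to\{1,\dots,n\}$ and use the fact from Proposition~\ref{SIA_pos} that an automaton matrix is SIA if and only if its underlying map has a unique fixed point that is reached from every state under iteration.

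For the upper bound, I would simply exhibit the explicit witness $w = A B^{n-2}$ of length $n-1$ and verify it is SIA by a direct computation of $f_w$. Since $B^{n-2}$ is the pure cyclic shift by $n-2$ applied to whatever $A$ outputs, one checks: for $2 \leq i \leq n-1$ the trajectory starting at $i$ passes through $n$ once under $B$ and ends at $i-1$; the state $i = 1$ escapes to $n$ via $A$ and then $B^{n-2}$ lands at $n$; and $i = n$ is sent by $A$ to $2$ and then by $B^{n-2}$ back to $n$. So $f_w$ has a self-loop at $n$, with every other state reaching $n$ within $n-1$ iterations, and $P$ is SIA.

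For the lower bound, I would show that every word $w = w_1 \cdots w_\ell$ with $\ell \leq n-2$ gives a product with no fixed point, hence is not SIA. The key lemma is that for any state $i$,
\[
f_w(i) \equiv i + \ell + c_i(w) \pmod n,
\]
where $c_i(w)$ counts the steps $k$ at which the trajectory of $i$ is at state $n$ and the letter $w_{k+1}$ equals $A$. This follows by induction on $\ell$ from the fact that $A$ and $B$ both agree with $g$ except that $A$ adds an extra $+1$ at state $n$. The second ingredient is the bound $c_i(w) \leq 1$ when $\ell \leq n-2$: after any visit to $n$ we land at $1$ or $2$, and from either of these states we need at least $n-2$ pure shifts to return to $n$, so two visits to $n$ would require at least $n-1$ intermediate steps and hence $\ell \geq n-1$. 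Combining the two gives $1 \leq \ell + c_i(w) \leq (n-2) + 1 = n-1$, so $\ell + c_i(w) \not\equiv 0 \pmod n$ and $f_w$ has no fixed point. Therefore no word of length at most $n-2$ yields an SIA product, giving $\sia(\mathcal{W}_n) \geq n-1$.

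The routine calculation in the upper bound and the modular formula in the lower bound are mechanical; the only subtle point is the geometric argument that the trajectory cannot revisit $n$ within $n-2$ steps, which I expect to be the main (though still elementary) obstacle and which must be phrased carefully because the state $i=n$ itself starts at $n$ and could in principle ``visit $n$'' at time $0$.
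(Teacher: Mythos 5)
Your proof is correct, and your lower bound takes a genuinely different route from the paper's. The paper obtains $\sia(\mathcal{W}_n) \ge n-1$ indirectly: it quotes the external fact that the positive-column index (reset threshold) of $\mathcal{W}_n$ equals $n^2-3n+3$ and combines it with Proposition~\ref{SIA_auto}, which forces $(n-1)\,\sia(\mathcal{W}_n) \ge \pc(\mathcal{W}_n) = n^2-3n+3$ and hence $\sia(\mathcal{W}_n) \ge n-1$. You instead argue directly on the functional graphs: the modular bookkeeping $f_w(i) \equiv i + \ell + c_i(w) \pmod n$, together with the observation that a trajectory cannot trigger the extra ``$+1$'' at state $n$ twice within $n-2$ letters, shows that no product of length at most $n-2$ has a fixed point in its underlying map, hence none is SIA. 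Your argument is self-contained and elementary (it does not rely on the nontrivial reset-threshold value from~\cite{AGV2013}) and proves the slightly stronger statement that all short products are fixed-point-free; the paper's argument is shorter on the page but outsources the combinatorial work to the cited result. The upper bound uses the same witness $AB^{n-2}$ in both proofs; your verbal trace of the trajectory of state $1$ contains a small slip ($A$ sends $1$ to $2$, not to $n$; it is the subsequent $n-2$ shifts that land it at $n$), but the conclusion that the product has a unique fixed point reached from every state is correct.
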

\begin{proof}
Recall that the shortest positive-column product of  $\mathcal{W}_n$ has length $n-3n+3$~\cite[Theorem 2]{AGV2013}.
Since $(n-1) \sia(\mathcal{W}_n)$ cannot be strictly smaller than $n-3n+3$ by Proposition~\ref{SIA_auto}, we immediately conclude that $\sia(\mathcal{W}_n) \geq (n-1)$. This bound is tight, since $AB^{n-2}$ is the desired SIA product.
\end{proof}

%MYFIGURE
%\begin{figure}
%\centering
%\begin{subfigure}{7 cm}
%  \centering
%  \begin{tabular}{l|l}
%   Automaton & SIA-index \\
%   \hline
%   \v{C}ern\'{y} family (with $n\geq 3$) & $n$  \\
%   3 states (3 different automata) & 3 \\
%   4 states (3 automata) & 5 \\
%   5 states & 7 \\
%   6 states (Kari automaton) & 9
%  \end{tabular}
%  \caption{SIA-indices of matrix classes}
%  \label{fig:sub1}
%\end{subfigure}%
%\hspace{.5cm}
%\begin{subfigure}{7cm}
%  \centering
%\includegraphics[scale=.5]{bound_vs_2n.png}
%\caption{Maximum SIA-index for pairs of matrices (blue) and $2n$ %(red).}
%\label{2n}
%\end{subfigure}
%\caption{Examples of }
%\label{fig:sia_growth}
%\end{figure}

%\com{The section on complexity has been moved to here.}
\subsection{Complexity Results}
\label{sec:complexity}

In the present section we address algorithmic problems related to SIA sets and the SIA-index. We assume that the reader is familiar with the computational complexity theory.

\subsubsection{Checking the Existence of an SIA Product}

Proposition \ref{prop:existence} states that for a set $\mathcal{S}$, the existences of a positive-column product, of a scrambling product, of a Sarymsakov product, and  of an SIA-product are equivalent.
Therefore, deciding algorithmically the existence of an SIA-product  can be done in polynomial time with techniques developed in \cite[Section 5]{PV} and in \cite{CDC} to decide the existence of a positive-column product.

%\subsubsection{NP-Completeness}
\subsubsection{Computing the SIA-index}

Now we are going to show that the problem of finding the SIA-index of a given set of matrices is computationally hard. Furthermore, it is hard even for matrices with positive diagonals  -- a special case frequently arising in consensus applications.
%\begin{prob} \label{pb_prim} Given a set of row-allowable matrices and an integer $\ell$. Is there an SIA product of this matrices that is shorter than $\ell$?
%\end{prob} 
%To prove the NP-hardness of this problem, 
More precisely, we will present 
%use 
a reduction
from the Boolean satisfiability problem to the problem of computing the SIA-index of sets of matrices 
%that constructs matrices 
with the following zero pattern:
$$\begin{pmatrix} + &&& \\
\cdot & + && \\
\vdots && \ddots & \\
\cdot &&& +  
\end{pmatrix},$$ where $+$ denotes a positive element and $\cdot$ denotes an element that can be either positive or zero. The following lemma establishes that such a matrix is SIA if and only if it is positive-column.

\begin{lem}
Let $P$ be a stochastic matrix that has a positive diagonal and all elements that belong neither to the diagonal nor to the first column are 0. $P$ is SIA if and only if it is positive-column.
\begin{proof}
The "if" part is a consequence of inclusions (\ref{inclusions}). 
We prove the "only if" part. If $P$ is SIA, there is $k$ such that $P^k$ is positive-column. If $k = 1$, $P$ is positive-column and the proof is complete. So we are left with the case $k\geq 2$.
It is clear that only the first column of $P^k$ can be positive and that $A_{11} > 0$, so we prove that for any $i \in \{2, \dots, n\}$, $P_{i1} >0$.

For any $i \in \{2, \dots, n\}$, the element $(P^k)_{i1}$ is positive only if $P_{i1}$ or $(P^{k-1})_{i1}$ is positive. If $A_{i1}$ is positive, the proof is complete. If not, we use $(P^{k-1})_{i1} > 0$ and we repeat the same argument, so that either $P_{i1}>0$ or $(P^{k-2})_{i1}>0$ and we can continue iteratively until we arrive at the conclusion that either $P_{i1}>0$ or $P_{i1}>0$.
\end{proof}
\label{lem:diag}
\end{lem}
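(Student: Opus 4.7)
The plan is to treat the two directions separately. The ``if'' direction is immediate: every positive-column stochastic matrix is SIA by the inclusion chain \eqref{inclusions}. All the real work is in the ``only if'' direction, where I must exploit the very restrictive zero pattern.

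The first step I would take is to show that this zero pattern is preserved under multiplication (and hence under taking powers). Expanding $(P^2)_{ij}=\sum_{l} P_{il}P_{lj}$ and noting that in row $i$ the only nonzero entries of $P$ are at positions $l=1$ and $l=i$, the sum reduces to $P_{i1}P_{1j}+P_{ii}P_{ij}$. But $P_{1j}$ vanishes unless $j=1$ (since row $1$ has only the diagonal entry in the allowed region) and $P_{ij}$ vanishes unless $j\in\{1,i\}$, so $(P^2)_{ij}=0$ unless $j\in\{1,i\}$. A one-line induction then gives $P^k$ the same zero pattern. The same computation shows $(P^k)_{ii}=P_{ii}^{\,k}>0$ for each $i$, i.e., the diagonal of every power remains positive.

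The second step is to observe that, for any matrix having this zero pattern, the only column that can be entrywise positive is column $1$: if $j\ge 2$ then $(P^k)_{ij}=0$ for every $i\neq j$. Combining this with Proposition~\ref{SIA_pos}, the SIA hypothesis yields some $k$ with $P^k$ positive-column, and that column is necessarily the first, i.e. $(P^k)_{i1}>0$ for all $i\ge 2$.

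The third step is the induction that transfers positivity of $(P^k)_{i1}$ down to $P_{i1}$. Expanding as above gives the recursion
\[
(P^k)_{i1} \;=\; (P^{k-1})_{i1}\,P_{11} \;+\; (P^{k-1})_{ii}\,P_{i1},
\]
and since $P_{11}>0$ and $(P^{k-1})_{ii}=P_{ii}^{\,k-1}>0$ by the first step, positivity of $(P^k)_{i1}$ forces positivity of at least one of $(P^{k-1})_{i1}$ and $P_{i1}$. Induction on $k$ (with the base case $k=1$ being trivial) then yields $P_{i1}>0$ for every $i\ge 2$, which together with $P_{11}>0$ shows that $P$ itself is positive-column. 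I do not anticipate any genuine obstacle: the whole argument rests on the closure of the zero pattern under multiplication, and once that is in hand the rest is bookkeeping.
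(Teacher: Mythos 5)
Your proof is correct and takes essentially the same approach as the paper: the ``if'' direction via the inclusions \eqref{inclusions}, and the ``only if'' direction via the same descent from $(P^k)_{i1}>0$ to $P_{i1}>0$ through the recursion obtained by expanding the product. The only difference is that you explicitly verify the closure of the zero pattern under multiplication and the positivity of the diagonal of every power, details the paper's proof leaves implicit.
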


\begin{thm} 
%Problem \ref{pb_prim} is NP-complete. 
%It remains NP-complete for sets of matrices with a positive diagonal.
For a given integer $k$ and a set $\mathcal{S}$ of stochastic matrices with positive diagonal, the problem of deciding whether $\sia(\mathcal{S}) \leq k$ is NP-complete.
%Furthermore, it remains NP-complete if $\mathcal{S}$ consists of matrices with a positive diagonal.
\begin{proof} We use a reduction from 3-SAT similar to that of Theorem 8 in \cite{eppstein90reset}. 

A 3-SAT instance consists of variables $X_1, \dots, X_v$, clauses $C_1, \dots, C_c$ of the form $L_1 \lor L_2 \lor L_3$ where each $L_i$ is a literal, that is, either a variable or the negation of a variable. The problem is to determine whether the formula $C_1 \land \dots \land C_c$ is satisfiable. 

Given a 3-SAT formula $F$ with $v$ variables, we construct a set of matrices that has an SIA product of length smaller or equal to $v$ if and only if the 3-SAT formula is satisfiable. 

We define a set $\mathcal{S}$ of $2v$ matrices. 
The matrices have size $(1+c+v) \times (1+c+v)$. %(where $c$ is the number of clauses). 
The matrix $A_{X_i}$, representing literal $X_i$, has ones on the diagonal, a 1 in position $(1+i, 1)$ and a 1 in position $(1+v+j,1)$ for every $j$ for which assigning $X_i$ to TRUE satisfies clause $C_j$. 
The matrix $A_{\lnot X_i}$  has ones on the diagonal,  a 1 in position $(1+i, 1)$ and a 1 in position $(1+v+j,1)$ if assigning $X_i$ to FALSE satisfies clause $C_j$. 

Theorem 6 of \cite{CDC} proves that the formula $F$ is satisfiable if and only if $\mathcal{S}$ has a positive-column product of length at most $v$. By Lemma \ref{lem:diag}, $\mathcal{S}$ has a positive-column product of length at most $v$ if and only if it has an SIA-product of length at most $v$, which concludes the proof of NP-hardness.

Additionally, it is clear that the problem belongs to NP. 
\end{proof}
\end{thm}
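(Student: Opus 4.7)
The plan is to establish NP-completeness by verifying membership in NP and then exhibiting a polynomial-time reduction from 3-SAT that exploits Lemma~\ref{lem:diag} to convert SIA questions into positive-column questions.

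For NP membership, the natural witness is the list of indices selecting an SIA product. To keep the witness size polynomial, I would first observe that we may restrict to $k \le \sia(n)$, since the problem is trivially yes for larger $k$ as soon as the set is SIA; and by inequality~\eqref{ineq2} together with the known cubic upper bound on $\pc(n)$, we have $\sia(n) \le (n^3-n)/6$, so the witness has polynomial length. Verification amounts to multiplying out the product and testing whether the result $M$ is SIA; by Proposition~\ref{SIA_pos} this is equivalent to checking that some power of $M$ has a positive column, and by the corollary to Theorem~\ref{wielk} it suffices to inspect $M^{n^2-3n+3}$, which is polynomially computable.

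For NP-hardness I would reduce from 3-SAT, following the style of Eppstein's reduction~\cite{eppstein90reset} and the positive-column-index reduction of~\cite{CDC}. Given a formula $F$ with variables $X_1,\dots,X_v$ and clauses $C_1,\dots,C_c$, I would build a set $\mathcal{S}$ of $2v$ stochastic matrices of dimension $1+v+c$, one per literal. Each matrix $A_{X_i}$ would have all-ones diagonal, a $1$ in position $(1+i,1)$ marking that variable $X_i$ has been assigned, and a $1$ in position $(1+v+j,1)$ for every clause $C_j$ that the assignment $X_i = \mathrm{TRUE}$ satisfies; the matrix $A_{\lnot X_i}$ is defined symmetrically. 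The actual entries would then be normalized so that each row sums to $1$. By design, every matrix in $\mathcal{S}$ has positive diagonal and all off-diagonal support confined to the first column, so it satisfies the hypothesis of Lemma~\ref{lem:diag}.

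The final step is to observe that this zero pattern is preserved under matrix multiplication (a short direct check), so Lemma~\ref{lem:diag} applies not only to the generators but to every product in $\mathcal{S}$, making ``SIA'' and ``positive-column'' coincide for all such products. Consequently $\sia(\mathcal{S}) \le v$ is equivalent to the existence of a positive-column product of length at most $v$, which by Theorem~6 of~\cite{CDC} is in turn equivalent to satisfiability of $F$. Since the construction is polynomial in $|F|$, this completes the Karp reduction. The only delicate point I expect is checking closure of the zero pattern under multiplication so that Lemma~\ref{lem:diag} transfers cleanly from generators to arbitrary products, and verifying that the clause-encoding column entries indeed track literal satisfaction as in the cited reduction; both amount to routine unpacking rather than any new argument.
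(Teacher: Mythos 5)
Your proof is correct and follows essentially the same route as the paper: the identical reduction from 3-SAT via literal matrices with positive diagonal and off-diagonal support confined to the first column, combined with Lemma~\ref{lem:diag} and Theorem~6 of~\cite{CDC}. The only differences are that you spell out two points the paper treats as obvious --- the polynomial bound on the witness length for NP membership and the closure of the zero pattern under multiplication so that Lemma~\ref{lem:diag} applies to products rather than just generators --- both of which are correct and worthwhile clarifications.
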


By the same argument we can conclude that the problems of deciding whether there exists a \emph{Sarymsakov}, \emph{scrambling}, or \emph{positive-column} product of length at most $\ell$ are NP-complete as well.
%By the same arguments, this NP-completeness result remains the same for the problems of deciding the existence of a \emph{sarymsakov}, \emph{scrambling}, or \emph{positive-column} product of length $\leq \ell$ (instead of an SIA product).
The previous NP-completeness result holds for matrices with positive diagonals. If we remove the restriction that the matrices have positive diagonals, we can obtain a stronger infeasibility result.  
\begin{thm}
For every $\alpha>0$, it is NP-hard to approximate the SIA-index of sets of automaton matrices of size $n \times n$ within a factor $(1 - \alpha) \log (n)$. 
%The problem remains NP-hard when restricted to \ed{automata} matrices.
\end{thm}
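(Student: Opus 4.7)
The plan is a gap-preserving reduction from Set Cover, whose NP-hardness of approximation within a factor $(1-\alpha)\log u$ for any $\alpha > 0$ (where $u$ denotes the universe size) is a celebrated theorem of Dinur and Steurer. Given a Set Cover instance with universe $U = \{e_1, \ldots, e_u\}$ and family $\mathcal{F} = \{S_1, \ldots, S_m\}$, I construct a set $\mathcal{S}$ of $m$ automaton matrices of size $n \times n$ with $n = u + 1$. The $n$ states are labelled $s, q_1, \ldots, q_u$, with $s$ a distinguished sink. For each $S_j \in \mathcal{F}$, the matrix $A_j$ encodes the function that fixes $s$, sends $q_i \to s$ whenever $e_i \in S_j$, and fixes $q_i$ whenever $e_i \notin S_j$. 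Each row has exactly one $1$, so each $A_j$ is indeed an automaton matrix.

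The key step is to prove $\sia(\mathcal{S})$ equals the minimum set-cover size. For the upper bound, if $\{S_{j_1}, \ldots, S_{j_k}\}$ covers $U$, then the product $A_{j_1} \cdots A_{j_k}$ sends every state to $s$, so it is positive-column and hence SIA by the inclusions \eqref{inclusions}; this yields an SIA product of length $k$. For the lower bound, I combine Proposition~\ref{SIA_pos} with the fact that an automaton matrix corresponds to a function on states: such a matrix is SIA if and only if its functional graph has a unique recurrent element, and that element must be a fixed point reached from every state. If $A_{j_1} \cdots A_{j_k}$ is SIA but the collection $\{S_{j_1}, \ldots, S_{j_k}\}$ misses some $e_{i^\star}$, then each $A_{j_\ell}$ fixes $q_{i^\star}$, so the product has both $s$ and $q_{i^\star}$ as fixed points, contradicting uniqueness. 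Hence every SIA product has length at least the minimum set cover size.

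With this exact correspondence in hand, any polynomial-time $(1-\alpha)\log n$-approximation algorithm for $\sia$, applied to $\mathcal{S}$, would yield a $(1-\alpha)\log(u+1)$-approximation of the minimum set cover. Since $\log(u+1) = \log u + o(1)$, for large enough $u$ and any $\alpha' < \alpha$ this is bounded by $(1-\alpha')\log u$, contradicting the Dinur--Steurer bound. The main conceptual obstacle is the SIA-characterization of automaton-matrix products via functional graphs (which is precisely where Proposition~\ref{SIA_pos} and the automaton structure interact); the remaining issues are routine, namely checking that each constructed $A_j$ is a legitimate automaton matrix and that the logarithmic factor transfers from the universe size $u$ to the matrix dimension $n=u+1$ with only a negligible loss that can be absorbed into $\alpha$.
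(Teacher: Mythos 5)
Your proposal is correct and follows essentially the same route as the paper: the identical reduction from Set Cover (one automaton matrix per set, sending covered elements to a fixed sink state and fixing the rest), the same exact equality between $\sia(\mathcal{S})$ and the minimum cover size, and the same appeal to the Dinur--Steurer inapproximability bound. Your justification of the lower bound via fixed points of the functional graph and your handling of the $\log(u+1)$ versus $\log u$ factor are slightly more explicit than the paper's, but the argument is the same.
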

\begin{proof}
%The theorem follows from the result of Gerbush and Heeringa~\cite[Section 3]{GH2011} that can be restated in terms of matrices as follows: given a set of automaton matrices, there is no polynomial-time algorithm computing the positive-column-index within a factor $c \log (n)$ for some $c>0$, unless P=NP. The proof relies on a relatively simple reduction from the set cover problem.

Our proof is based on 
%essentially repeats
the reduction given in~\cite[Section 3]{GH2011} showing inapproximability of the reset thresholds, i.e., positive-column indices of automaton matrices. Recall that the classical \emph{set cover problem} is formalized as follows.
Given a set of elements $U=\{1, 2, \ldots, n\}$ and a collection $\mathcal{F}$ of subsets of $U$ whose union is equal to $U$, i.e., $U = \cup_{T \in \mathcal{F}} T$. The set cover problem is to compute a sub-collection $\mathcal{F}' \subseteq \mathcal{F}$ of the smallest possible size, whose union still equals to $U$. This problem is computationally hard, namely, for every $\alpha>0$, there is no $(1 - \alpha) \log (n)$ approximation algorithm computing the set cover, unless $P=NP$~\cite{Dinur14}.

The reduction from the set cover problem to our problem 
%about stochastic matrices
is done as follows. For every set $T \in \mathcal{F}$ we construct a $(0,1)$-matrix $A_T$ of size $(n+1) \times (n+1)$:
$A_T(i,j)=1$ if and only if either $i \in T$ and $j = n + 1$, or $i \notin T$ and $j = i$. Clearly, $A_T$ is stochastic and automatic. Let $\mathcal{S} = \{A_T \mid T \in \mathcal{F}\}$. 
It is easy to see that a product $P=A_{T_1}A_{T_2}\ldots A_{T_\ell}$ of matrices from $\mathcal{S}$ is SIA if and only if the $(n+1)$st column of $P$ is positive. Observe now that the latter condition is equivalent to \[U=\bigcup_{i=1,\ldots,\ell} T_i.\] 
%It is straightforward to show that
Therefore, 
$\sia(\mathcal{S})$ is equal to the size of the smallest cover of $U$ and the result follows.

%The result follows from the inapproximability result about the 
%We use a reduction similar to the reduction from the set cover problem to synchronizing automata in~\cite{GH2011}. In this particular case SIA-index and the reset threshold coincide.
\end{proof}

\section*{Conclusion}
In the present paper we studied SIA sets of stochastic matrices and compared them to other classical matrix classes.
%--  the sets having at least one infinite product converging to a rank one matrix. 
We introduced the SIA-index for such sets and related it to Lyndon words and well-studied combinatorial characteristics such as local exponents and the scrambling index. Bounds on the SIA-index in terms of matrix sizes were provided and it was shown that $\sia(\mathcal{S})$ is hard to compute for a given matrix set $\mathcal{S}$. 

Several problems remain open. First, there is a significant gap  between the lower and upper bounds on $\sia(n)$. As we have seen, this problem is related to the long-standing \v{C}ern\'{y} conjecture from automata theory and might be difficult to resolve completely. Another open problem is to find the best approximation ratio achievable by polynomial-time algorithms computing the SIA-index of a given matrix set. Since our contribution is based on synchronizing automata theory, the methods used in~\cite{Gawrychowski2015} can potentially be used to establish the exact ratio. %Lastly, we have used a combinatorial \ed{reduction} 
%argument
%to \ed{prove} 
%show
%that there is an SIA set of automaton matrices having the largest possible SIA-index among stochastic matrices of fixed size.
We have proved that for each matrix size $n$ there is an SIA set of automaton matrices having the largest possible SIA-index among stochastic matrices of size $n \times n$.
%a SIA set of matrices of fixed size with the largest possible SIA-index consists of automaton matrices. 
%The  largest SIA-index is the solution of an optimization problem on the space of stochastic matrix sets. Since automaton matrices are the vertices of the polytope of stochastic matrices, we wonder whether these facts can be used to construct another proof of this result. This could potentially unify and generalize similar results appearing in~\cite{GGJ16,BJO14,CDC}.
%we wonder whether formulating the search of the largest SIA-index as an optimization problem on the space of sets of stochastic matrices can be used to prove this result.
We wonder whether representing the largest SIA-index as the solution of an optimization problem on the space of stochastic matrix sets can lead to another proof of this result. Such formulation can potentially unify and generalize similar results appearing in~\cite{GGJ16,BJO14,CDC}.

\bibliography{references}{}
\bibliographystyle{plain}

\end{document}